\newcommand{\noun}[1]{\textsc{#1}}
\newcommand{\binom}[2]{{#1 \choose #2}}
\newtheorem{theorem}{Theorem}
\newtheorem{lem}[theorem]{Lemma}
\newtheorem{fact}[theorem]{Fact}
\newtheorem{claim}[theorem]{Claim}
\newtheorem{prop}[theorem]{Proposition}
\theoremstyle{definition}
\newtheorem{defn}{Definition}
\theoremstyle{remark}
\newtheorem*{rem}{Remark}
\newcommand{\eps}{\varepsilon}
\newcommand{\D}{{\cal D}}
\newcommand{\E}{{\cal E}}
\newcommand{\cube}{\operatorname{\{0, 1\}}}
\newcommand{\Z}{\mathbb{Z}}
\newcommand{\polyn}{\operatorname{poly}}
\newcommand{\polylog}{\operatorname{polylog} }
\newcommand{\mem}{\mbox{\textsc{Mem}}}
\newcommand{\polyeval}{\mbox{\textsc{PolyEval}}}
\date{}
\begin{document}

\title{Efficient and Error-Correcting Data Structures for Membership and
Polynomial Evaluation}

\author{Victor Chen%
\thanks{MIT CSAIL, victor@csail.mit.edu. Supported by NSF award CCF-0829672. %
} \and Elena Grigorescu%
\thanks{MIT CSAIL, elena\_g@mit.edu. This work started when this author was
visiting CWI in Summer 2008. Supported by NSF award CCF-0829672. %
} \and Ronald de Wolf%
\thanks{CWI Amsterdam, rdewolf@cwi.nl. Supported by a Vidi grant
from the Netherlands Organization for Scientific Research (NWO).
}}

\maketitle
\setcounter{page}{0} 
\begin{abstract}
We construct efficient data structures that are resilient against
a constant fraction of adversarial noise. Our model requires that
the decoder answers \emph{most} queries correctly with high probability
and for the remaining queries, the decoder with high probability either
answers correctly or declares ``don't know.'' Furthermore, if
there is no noise on the data structure, it answers \emph{all} queries
correctly with high probability. Our model is the common generalization
of an error-correcting data structure model proposed recently by de~Wolf, 
and the notion of ``relaxed locally decodable codes'' developed in the PCP literature.

We measure the efficiency of a data structure in terms of its \emph{length},
(the number of bits in its representation), and query-answering
time, measured by the number of \emph{bit-probes} to the (possibly corrupted)
representation. We obtain results for the following two data structure problems: 
\begin{itemize}
\item (Membership) Store a subset $S$ of size at most $s$ from a universe
of size $n$ such that membership queries can be answered efficiently,
i.e., decide if a given element from the universe is in $S$. \\
 We construct an error-correcting data structure for this problem
with length nearly linear in $s\log n$ that answers membership queries
with $O(1)$ bit-probes. This nearly matches the asymptotically optimal
parameters for the noiseless case: length $O(s\log n)$ and one bit-probe,
due to Buhrman, Miltersen, Radhakrishnan, and Venkatesh. 
\item (Univariate polynomial evaluation) Store a univariate polynomial $g$
of degree $\deg(g)\leq s$ over the integers modulo $n$ such that
evaluation queries can be answered efficiently, i.e., evaluate the
output of $g$ on a given integer modulo $n$. \\
 We construct an error-correcting data structure for this problem
with length nearly linear in $s\log n$ that answers evaluation queries
with $\polylog s\cdot\log^{1+o(1)}n$ bit-probes. This nearly matches
the parameters of the best-known noiseless construction, due to Kedlaya
and Umans.
\end{itemize}

\end{abstract}
\thispagestyle{empty} \newpage{}

\section{Introduction}
The area of data structures is one of the oldest and most fundamental
parts of computer science, in theory as well as in practice. The
underlying question is a time-space tradeoff: we are given a piece
of data, and we would like to store it in a short, space-efficient
data structure that allows us to quickly answer specific queries about
the stored data. On one extreme, we can store the data as just
a list of the correct answers to all possible queries. This is extremely
time-efficient (one can immediately look up the correct answer without
doing any computation) but usually takes significantly more space
than the information-theoretic minimum. At the other extreme, we can
store a maximally compressed version of the data. This method is extremely
space-efficient but not very time-efficient since one usually has to undo
the whole compression first. A good data structure sits somewhere
in the middle: it does not use much more space than the information-theoretic
minimum, but it also stores the data in a structured way that enables
efficient query-answering.

It is reasonable to assume that most practical implementations of
data storage are susceptible to \emph{noise}: over time some of the
information in the data structure may be corrupted or erased by various
accidental or malicious causes. This buildup of errors may cause the
data structure to deteriorate so that most queries are not answered
correctly anymore. Accordingly, it is a natural task to design data
structures that are not only efficient in space and time but also
resilient against a certain amount of \emph{adversarial} noise, where
the noise can be placed in positions that make decoding as difficult
as possible. 

Ways to protect information and computation against noise
have been well studied in the theory of error-correcting codes
and of fault-tolerant computation. In the data structure
literature, constructions under often incomparable models have been
designed to cope with noise, and we examine a few of these models.
Aumann and Bender~\cite{aumann&bender:ftdata} studied pointer-based
data structures such as linked lists, stacks, and binary search trees.
In this model, errors (adversarial but detectable) occur whenever
all the pointers from a node are lost. They measure the dependency
between the number of errors and the number of nodes that become irretrievable,
and designed a number of efficient data structures where this dependency is reasonable.

Another model for studying data structures with noise is the faulty-memory RAM model, 
introduced by Finocchi and Italiano~\cite{finocchi&italiano:faults}. 
In a faulty-memory RAM, there are $O(1)$ memory cells that cannot
be corrupted by noise. Elsewhere, errors (adversarial and undetectable)
may occur at any time, even during the decoding procedure.
Many data structure problems have been examined in this model, 
such as sorting~\cite{fgi:resilientsorting}, searching~\cite{fgi:resilientsearch},
priority queues~\cite{jmm:resilientpriority} and dictionaries~\cite{bffgijmm:resilientdict}.
However, the number of errors that can be tolerated is typically less than
a linear portion of the size of the input. Furthermore, correctness
can only be guaranteed for keys that are not affected by noise. For
instance, for the problem of comparison-sorting on $n$ keys, the
authors of~\cite{fgi:resilientsorting} designed a resilient sorting
algorithm that tolerates $\sqrt{n\log n}$ keys being corrupted and
ensures that the set of uncorrupted keys remains sorted.

Recently, de~Wolf~\cite{wolf:ecdata} considered another model of
resilient data structures. The representation of the data structure
is viewed as a bit-string, from which a decoding procedure can read
any particular set of bits to answer a data query. The representation
must be able to tolerate a constant fraction $\delta$ of adversarial noise in the bit-string%
\footnote{We only consider bit-flip-errors here, not erasures. Since erasures
are easier to deal with than bit-flips, it suffices to design a data
structure dealing with bit-flip-errors.} 
(but not inside the decoding procedure). His model generalizes the
usual noise-free data structures (where $\delta=0$) as well as the
so-called ``locally decodable codes'' (LDCs)~\cite{katz&trevisan:ldc}.
Informally, an LDC is an encoding that is tolerant of noise and allows
fast decoding so that each message symbol can be retrieved correctly
with high probability. Using LDCs as building blocks, de~Wolf constructed
data structures for several problems. 

Unfortunately, de~Wolf's model has the drawback that the optimal
time-space tradeoffs are much worse than in the noise-free model.
The reason is that all known constructions of LDCs that make $O(1)$
bit-probes~\cite{yekhanin:3ldcj,efremenko:ldc} have very poor encoding
length (super-polynomial in the message length). In fact, the encoding
length provably must be super-linear in the message length~\cite{katz&trevisan:ldc,kerenidis&wolf:qldcj,woodruff:ldclower}.
As his model is a generalization of LDCs, data structures cannot have
a succinct representation that has length proportional to the information-theoretic
bound.

We thus ask: what is a clean model of data structures that allows
efficient representations \emph{and} has error-correcting capabilities?
Compared with the pointer-based model and the faulty-memory RAM, de~Wolf's
model imposes a rather stringent requirement on decoding: \emph{every}
query must be answered correctly with high probability from the possibly corrupted
encoding. While this requirement is crucial in the definition of LDCs
due to their connection to complexity theory and cryptography, for
data structures it seems somewhat restrictive.

In this paper, we consider a broader, more relaxed notion of error-correcting for data structures. 
In our model, for most queries, the decoder has to return the correct answer with
high probability.  However, for the few remaining queries, the decoder may claim ignorance, 
i.e., declare the data item unrecoverable from the (corrupted) data structure. 
Still, for \emph{every} query, the answer is incorrect only with small probability.
In fact, just as de~Wolf's model is a generalization of LDCs, our model in this paper
is a generalization of the ``relaxed'' locally decodable codes (RLDCs) introduced
by Ben-Sasson, Goldreich, Harsha, Sudan, and Vadhan~\cite{bghsv04}.
They relax the usual definition of an LDC by requiring the decoder
to return the correct answer on \emph{most} rather than all queries.
For the remaining queries it is allowed to claim ignorance, i.e.,
to output a special symbol `$\perp$' interpreted as ``don't know''
or ``unrecoverable.'' As shown in~\cite{bghsv04}, relaxing the
LDC-definition like this allows for constructions of RLDCs
with $O(1)$ bit-probes of \emph{nearly linear} length. 

Using RLDCs as building blocks, we construct error-correcting data structures that are very efficient in terms of time
as well as space. Before we describe our results, let us define
our model formally. First, a \emph{data structure problem} is specified
by a set $D$ of \emph{data items}, a set $Q$ of \emph{queries},
a set $A$ of \emph{answers}, and a function $f:D\times Q\rightarrow A$
which specifies the correct answer $f(x,q)$ of query $q$ to data
item~$x$. A data structure for $f$ is specified by four parameters:
$t$ the number bit-probes, $\delta$ the fraction of noise, $\eps$ an upper
bound on the error probability for each query, and $\lambda$ an upper bound on
the fraction of queries in $Q$ that are not answered correctly with high probability 
(the `$\lambda$' stands for ``lost'').

\begin{defn}\label{def:data structure}Let $f:D\times Q\rightarrow A$
be a data structure problem. Let $t>0$ be an integer, 
$\delta\in[0,1]$, $\eps\in[0,1/2]$, and $\lambda\in[0,1]$. We
say that $f$ has a \emph{$(t,\delta,\eps,\lambda)$-data structure}
of length $N$ if there exist an encoder $\E:D\rightarrow\cube^{N}$ and a
(randomized) decoder $\D$ with the following properties: for every $x\in D$ and every $w\in\cube^{N}$ at Hamming distance $\Delta(w,\E(x))\leq\delta N$, 
\begin{enumerate}
\item $\D$ makes at most $t$ bit-probes to $w$, 
\item $\Pr[\D^{w}(q)\in\{f(x,q),\perp\}]\geq1-\eps$ for every $q\in Q$, 
\item the set $G=\{q:\Pr[\D^{w}(q)=f(x,q)]\geq1-\eps\}$ has size at least
$(1-\lambda)|Q|$ (`$G$' stands for ``good''),
\item if $w=\E(x)$, then $G=Q$. 
\end{enumerate}
\end{defn}

Here $\D^{w}(q)$ denotes the random variable which is the decoder's output on inputs $w$ and $q$.
The notation indicates that it accesses the two inputs in different ways: 
while it has full access to the query $q$, it only has bit-probe access (or ``oracle access'') to the string $w$.

We say that a $(t,\delta,\eps,\lambda)$-data structure is \emph{error-correcting},
or an \emph{error-correcting data structure}, if $\delta>0$. 
Setting $\lambda=0$ recovers the original notion of error-correction
in de~Wolf's model~\cite{wolf:ecdata}. A \emph{$(t,\delta,\eps,\lambda)$-relaxed locally
decodable code (RLDC)}, defined in~\cite{bghsv04}, is an error-correcting
data structure for the membership function $f:\cube^{n}\times[n]\rightarrow\cube$,
where $f(x,i)=x_{i}$. A \emph{$(t,\delta,\eps)$-locally decodable code (LDC)}, defined
by Katz and Trevisan~\cite{katz&trevisan:ldc}, is an RLDC with $\lambda=0$.

\begin{rem}For the data structure problems considered in this paper,
our decoding procedures make only \emph{non-adaptive} probes, i.e.,
the positions of the probes are determined all at once and sent simultaneously
to the oracle. For other data structure problems it may be natural
for decoding procedures to be adaptive. Thus, we do not require $\D$
to be non-adaptive in Condition~1 of Definition~\ref{def:data structure}.\end{rem}

\subsection{Our results\label{sub:Our results}}

We obtain efficient error-correcting data structures for the following two data structure problems.

\paragraph{\textbf{\noun{Membership:}}}

Consider a universe $[n]=\{1,\ldots,n\}$ and some nonnegative integer
$s\leq n$. Given a set $S\subseteq[n]$ with at most $s$ elements,
one would like to store $S$ in a compact representation that can
answer ``membership queries'' efficiently, i.e., given an index
$i\in[n]$, determine whether or not $i\in S$. Formally $D=\{S:S\subseteq[n],|S|\leq s\}$,
$Q=[n]$, and $A=\{0,1\}$. The function $\mem_{n,s}(S,i)$ is $1$
if $i\in S$ and $0$ otherwise.

Since there are at least $\binom{n}{s}$ subsets of the universe of
size at most $s$, each subset requiring a different instantiation
of the data structure, the information-theoretic lower bound on the
space of any data structure is at least $\log\binom{n}{s}\approx s\log n$
bits.%
\footnote{Our logs are always to base~$2$.}
An easy way to achieve this is to store $S$ in sorted order. If
each number is stored in its own $\log n$-bit ``cell,'' this
data structure takes $s$ cells, which is $s\log n$ bits. To answer
a membership query, one can do a binary search on the list to determine
whether $i\in S$ using about $\log s$ ``cell-probes,'' or $\log s\cdot\log n$
bit-probes. The length of this data structure is essentially optimal,
but its number of probes is not. Fredman, Koml\'{o}s, and Szemer\'{e}di~\cite{fks:sparsetable}
developed a famous hashing-based data structure that has length $O(s)$
cells (which is $O(s\log n)$ bits) and only needs a \emph{constant}
number of cell-probes (which is $O(\log n)$ bit-probes). Buhrman,
Miltersen, Radhakrishnan, and Venkatesh~\cite{bmrv:bitvectorsj}
improved upon this by designing a data structure of length $O(s\log n)$
bits that answers queries with \emph{only one bit-probe} and a small error probability. 
This is simultaneously optimal in terms of time (clearly one bit-probe cannot
be improved upon) and space (up to a constant factor).

None of the aforementioned data structures can tolerate a constant fraction of noise.
To protect against noise for this problem, de~Wolf~\cite{wolf:ecdata}
constructed an error-correcting data structure with $\lambda=0$ using
a locally decodable code (LDC). That construction
answers membership queries in $t$ bit-probes and has length roughly
$L(s,t)\log n$, where $L(s,t)$ is the shortest length of an LDC encoding
$s$ bits with bit-probe complexity $t$. Currently, all known LDCs with $t=O(1)$ have $L(s,t)$ super-polynomial in $s$~\cite{bikr:improvedpir,yekhanin:3ldcj,efremenko:ldc}.
In fact, $L(s,t)$ must be super-linear for all constant $t$, see e.g.~\cite{katz&trevisan:ldc,kerenidis&wolf:qldcj,woodruff:ldclower}.

Under our present model of error-correction, we can construct much
more efficient data structures with error-correcting capability. First,
it is not hard to show that by composing the BMRV data structure~\cite{bmrv:bitvectorsj}
with the error-correcting data structure for $\mem_{n,n}$ (equivalently,
an RLDC)~\cite{bghsv04}, one can already obtain an error-correcting
data structure of length $O((s\log n)^{1+\eta})$, where $\eta$ is
an arbitrarily small constant. However, following an approach taken
in~\cite{wolf:ecdata}, we obtain a data structure of length $O(s^{1+\eta}\log n)$,
which is much shorter than the aforementioned construction if $s=o(\log n)$.

\begin{theorem}\label{thm:membership} For every $\eps,\eta\in(0,1)$,
there exist an integer $t>0$ and real $\tau>0$, such that for
all $s$ and $n$, and every $\delta\leq\tau$, $\mem_{n,s}$ has
a $(t,\delta,\eps,\frac{s}{2n})$-data structure of length $O(s^{1+\eta}\log n)$.
\end{theorem}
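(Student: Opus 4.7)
The plan is to follow the composition approach of de~Wolf~\cite{wolf:ecdata}, substituting the BGHSV relaxed LDC~\cite{bghsv04} for the ordinary LDC used there. The outer layer is the Buhrman--Miltersen--Radhakrishnan--Venkatesh one-bit-probe data structure~\cite{bmrv:bitvectorsj} for $\mem_{n,s}$: viewed as a $d \times \ell$ bit table with $d = \Theta(\log n)$ rows of $\ell = \Theta(s)$ bits and fixed expander-based hash functions $h_1, \ldots, h_d : [n] \to [\ell]$, its decoder answers query $i$ by picking a random row $k \in [d]$ and reading bit $h_k(i)$ of row $k$, and the BMRV guarantee is that for every $S$ with $|S|\le s$ and every $i$, at most an $\eps_1$-fraction of the $d$ rows give the wrong bit. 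I will then encode each row separately with the BGHSV RLDC, blowing each $\ell$-bit row up to $\ell^{1+\eta}$ bits and admitting an $O(1)$-probe relaxed decoder tolerant of a constant noise fraction $\delta_0$; the total length becomes $d \cdot \ell^{1+\eta} = O(s^{1+\eta} \log n)$. The combined decoder on query $i$ picks a uniform $k \in [d]$, invokes the RLDC decoder on row $k$ at coordinate $h_k(i)$, and outputs whatever it returns. Taking $\tau$ a small enough constant multiple of $\delta_0$ guarantees that $\tau N$ adversarial flips can push only a small fraction of rows past the RLDC noise threshold.

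Conditions~1, 2, and~4 of Definition~\ref{def:data structure} are then straightforward: $t$ is the $O(1)$ probe count of the RLDC decoder; a union bound over the choice of $k$ and the RLDC's internal randomness gives $\Pr[\D^{w}(q) \in \{\mem_{n,s}(S, q), \perp\}] \ge 1 - \eps_1 - \eps_2 \ge 1 - \eps$; and in the noiseless case $w = \E(S)$ the RLDC decoder is always successful, so only the BMRV-side error $\eps_1 \le \eps$ remains, which places every query in the good set.

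The real work is Condition~3, bounding by $s/2$ the number of lost queries. Let $L_k \subseteq [\ell]$ denote the coordinates lost by the RLDC in row $k$, $|L_k| \le \lambda_2 \ell$. A query $i$ fails to be good only when many rows $k$ are either BMRV-bad for $i$ (at most $\eps_1 d$ rows per $i$, by BMRV) or satisfy $h_k(i) \in L_k$. The balanced-hashing property of the $h_k$'s gives a total count $\sum_k |\{i : h_k(i) \in L_k\}| = O(d n \lambda_2)$, and then averaging plus Markov bounds the number of queries with more than $(\eps - \eps_1) d$ RLDC-bad rows by $O(n \lambda_2 / (\eps - \eps_1))$.

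The hard part will be driving this count down to $s/2$ while keeping $t = O(1)$. The BGHSV RLDC only natively delivers a constant $\lambda_2$, which yields only a constant fraction of lost queries, not $s/(2n)$. I expect to close the gap either by amplifying $\lambda_2$ down to $O(s/n)$ using a constant number of independent RLDC repetitions of each row combined with majority/consensus voting (each repetition multiplies the probe count by a constant, preserving $t = O(1)$ for fixed $\eps, \eta$), or by exploiting the BMRV hash family more carefully so that in effect only $O(s)$ queries, rather than $O(n \lambda_2)$, can be caught by $L_k$ per row. Either route keeps the probe count constant and the length $O(s^{1+\eta} \log n)$, with all $\delta, \eps, \eta$-dependence absorbed into hidden constants.
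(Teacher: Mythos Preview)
Your high-level architecture (BMRV outer layer, BGHSV RLDC on each $\Theta(s)$-bit block, length $O(s^{1+\eta}\log n)$, decoder picks a random block and runs the RLDC decoder there) matches the paper's construction, and your treatment of Conditions~1, 2, and~4 is essentially what the paper does. The real divergence is in Condition~3, and there your proposal has a genuine gap.

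Neither of your two suggested routes closes the gap. For the repetition idea: taking $r$ independent RLDC copies of a row and voting does \emph{not} shrink the lost fraction. If each copy has lost set $L_c$ with $|L_c|\le\lambda_2\ell$, then the number of positions lost in at least $r/2$ copies is at most $\frac{\sum_c|L_c|}{r/2}\le 2\lambda_2\ell$; and since the adversary knows all encodings and chooses the noise, she can make the $L_c$'s coincide. So $\lambda_2$ stays a fixed constant, and your averaging bound still gives $\Theta(\lambda_2 n)$ lost queries, not $s/2$. For the ``only $O(s)$ queries caught per row'' idea: the BMRV graph has $n$ left vertices, left degree $d=\Theta(\log n)$, and $m=\Theta(s\log n)$ right vertices, so the average right degree is $\Theta(n/s)$. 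A lost set $L_k$ of size $\lambda_2\ell=\Theta(\lambda_2 s)$ in one block is therefore hit by $\Theta(\lambda_2 n)$ queries, not $O(s)$; balanced hashing alone cannot give what you want.

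What the paper actually uses is the \emph{expansion} property of the BMRV graph, applied globally rather than row-by-row. Let $A\subseteq[m]$ be the union over all blocks of the RLDC-lost positions (plus the few blocks whose noise exceeds the RLDC threshold). One has $|A|=O(\delta m)$, and by taking $\delta$ below a small absolute constant one gets $|A|<\frac{1}{40}sd$. The key claim is then: any left set $W$ with $|W|\ge s/2$ cannot have every vertex send $\ge d/10$ edges into such a small $A$, because expansion forces $|\Gamma(W)|\ge(1-\tfrac{1}{20})d|W|$, while the assumption would give $|\Gamma(W)|\le |A|+\tfrac{9}{10}d|W|<(1-\tfrac{1}{20})d|W|$. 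Hence $|\{i:|\Gamma_i\cap A|\ge d/10\}|<s/2$, and every $i$ outside this set has at least $\tfrac{9}{10}d$ good probes available, which is what drives Condition~3. This expander-based counting is the missing ingredient in your plan; it is not a consequence of balanced hashing or of amplifying the RLDC.
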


We will prove Theorem~\ref{thm:membership} in Section~\ref{sec:Membership problem}.
Note that the size of the good set $G$ is at least $n-\frac{s}{2}$.
Hence corrupting a $\delta$-fraction of the bits of the data structure
may cause a decoding failure for at most half of the queries $i\in S$
but not all. One may replace this factor $\frac{1}{2}$ easily by
another constant (though the parameters $t$ and $\tau$ will then
change).

\paragraph{\bf{\noun{Polynomial evaluation:}}}

Let $\Z_{n}$ denote the set of integers modulo $n$ and $s\leq n$
be some nonnegative integer. Given a univariate polynomial $g\in\Z_{n}[X]$
of degree at most $s$, we would like to store $g$ in a compact representation
so that for each evaluation query $a\in\Z_{n}$, $g(a)$ can be computed
efficiently. Formally, $D=\{g:g\in\Z_{n}[X],\deg(g)\leq s\}$, $Q=\Z_{n}$,
and $A=\Z_{n}$, and the function is $\polyeval_{n,s}(g,a)=g(a)$.

Since there are $n^{s+1}$ polynomials of degree at most $s$, with
each polynomial requiring a different instantiation of the data structure,
the information-theoretic lower bound on the space of any data structure
for this problem is at least $\log(n^{s+1})\approx s\log n$
bits. Since each answer is an element of $\Z_{n}$ and must be
represented by $\left\lfloor \log n\right\rfloor +1$ bits, $\left\lfloor \log n\right\rfloor +1$
is the information-theoretic lower bound on the bit-probe complexity.

Consider the following two naive solutions. On one hand, one can simply
record the evaluations of $g$ in a table with $n$ entries, each
with $\left\lfloor \log n\right\rfloor +1$ bits. The length of this
data structure is $O(n\log n)$ and each query requires reading only
$\left\lfloor \log n\right\rfloor +1$ bits. On the other hand, 
$g$ can be stored as a table of its $s+1$ coefficients. This gives a data structure of
length and bit-probe complexity $(s+1)(\left\lfloor \log n\right\rfloor +1)$. 

A natural question is whether one can construct a data structure that
is optimal both in terms of space and time, i.e., has length $O(s\log n)$
and answers queries with $O(\log n)$ bit-probes. No such constructions
are known to exist. However, some lower bounds are known in the weaker
cell-probe model, where each cell is a sequence of $\left\lfloor \log n\right\rfloor +1$
bits. For instance, as noted in~\cite{miltersen:cellprobesurvey},
any data structure for  \noun{polynomial evaluation} that stores $O(s^{2})$
cells ($O(s^{2}\log n)$ bits) requires reading at least $\Omega(s)$
cells ($\Omega(s\log n)$ bits). %
 Moreover, by~\cite{miltersen95polyeval}, if $\log n\gg s\log s$
and the data structure is constrained to store $s^{O(1)}$ cells,
then its query complexity is $\Omega(s)$ cells. This implies that
the second trivial construction described above is essentially optimal in the cell-probe model.

Recently, Kedlaya and Umans~\cite{kedl-umans} obtained a data structure
of length $s^{1+\eta}\log^{1+o(1)}n$ (where $\eta$ is an arbitrarily
small constant) and answers evaluation queries with $O(\polylog s\cdot\log^{1+o(1)}n)$
bit-probes. These parameters exhibit the best tradeoff
between $s$ and $n$ so far. When $s=n^{\eta}$ for some $0<\eta<1$,
the data structure of Kedlaya and Umans~\cite{kedl-umans} is much superior
to the trivial solution: its length is nearly optimal, and the query complexity
drops from $\polyn n$ to only $\polylog n$ bit-probes.

Here we construct an error-correcting data structure for the polynomial
evaluation problem that works even in the presence of adversarial noise,
with length nearly linear in $s\log n$ and bit-probe complexity $O(\polylog s\cdot\log^{1+o(1)}n)$. Formally:

\begin{theorem}\label{thm:poly eval} For every $\eps,\lambda,\eta\in(0,1)$,
there exists $\tau\in(0,1)$ such that for all positive integers $s\leq n$,
for all $\delta\leq\tau$, the data structure problem $\polyeval_{n,s}$
has a $(O(\polylog s\cdot\log^{1+o(1)}n),\delta,\eps,\lambda)$-data structure 
of length $O((s\log n)^{1+\eta})$. 
\end{theorem}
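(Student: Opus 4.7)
The plan is to lift the noiseless Kedlaya--Umans polynomial-evaluation data structure into an error-correcting one by composing it with a relaxed locally decodable code from Ben-Sasson et al., in the same spirit as how Theorem~\ref{thm:membership} can be obtained by composing the BMRV data structure with an RLDC.

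First I would invoke the Kedlaya--Umans construction to obtain a noiseless inner data structure $\E_{KU}$ producing strings of length $N_{KU}=s^{1+\eta_1}\log^{1+o(1)}n$, with a non-adaptive decoder $\D_{KU}$ making $t_{KU}=O(\polylog s\cdot\log^{1+o(1)}n)$ bit-probes, where $\eta_1>0$ is a small constant chosen below. Next I would encode $\E_{KU}(g)$ by a BGHSV $(t_{RLDC},\delta_{RLDC},\eps_{RLDC},\lambda_{RLDC})$-RLDC with $t_{RLDC}=O(1)$, a constant $\delta_{RLDC}>0$, and length $N_{KU}^{1+\eta_2}$, for any desired small constants $\eps_{RLDC},\lambda_{RLDC}>0$. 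Setting $\E(g)=\E_{RLDC}(\E_{KU}(g))$, and choosing $\eta_1,\eta_2$ small enough that both $(1+\eta_1)(1+\eta_2)$ and $(1+o(1))(1+\eta_2)$ stay below $1+\eta$, yields total length $O((s\log n)^{1+\eta})$.

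The outer decoder $\D^w(a)$ simulates $\D_{KU}(a)$ and, whenever $\D_{KU}$ requests bit $i$ of $\E_{KU}(g)$, recovers it by invoking $\D_{RLDC}^w(i)$, repeated $O(\log t_{KU})$ times with independent randomness and taking majority so that the per-bit error shrinks to $\eps_{RLDC}'=\eps/(2t_{KU})$. If any of these majority votes is $\perp$, $\D$ outputs $\perp$; otherwise it returns the output of $\D_{KU}$ on the recovered bits. The total bit-probe count is $t_{KU}\cdot t_{RLDC}\cdot O(\log t_{KU})=O(\polylog s\cdot\log^{1+o(1)}n)$, and the corruption tolerance is $\tau=\delta_{RLDC}$. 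Condition~2 of Definition~\ref{def:data structure} then holds by a union bound over the $t_{KU}$ probes, each misleading $\D$ (producing a non-$\perp$ wrong bit) with probability at most $\eps_{RLDC}'$. Condition~4 is immediate, since on $w=\E(g)$ each RLDC call returns the correct bit deterministically and $\D_{KU}$ is always correct on $\E_{KU}(g)$.

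The main obstacle is Condition~3. Let $B\subseteq[N_{KU}]$ be the (noise-dependent) RLDC-bad set, of size at most $\lambda_{RLDC}N_{KU}$. A query $a$ is ``lost'' iff some probe of $\D_{KU}(a)$ falls in $B$. A double-counting argument bounds the number of lost queries by $\sum_{i\in B}|\{a\in Q:\ \D_{KU}(a)\text{ probes }i\}|$, which is $O(\lambda_{RLDC}t_{KU}|Q|)$ provided $\D_{KU}$ is \emph{smooth} in the sense that each position $i\in[N_{KU}]$ is probed for at most $O(t_{KU}|Q|/N_{KU})$ different values of $a$. Setting $\lambda_{RLDC}=\Theta(\lambda/t_{KU})$ then gives at most $\lambda|Q|$ lost queries. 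Smoothness would be either verified directly from the algebraic structure of the Kedlaya--Umans decoder, or enforced generically by prepending a preprocessing step that replicates each bit of $\E_{KU}(g)$ a few times and randomizes which copy is probed; in either case the extra cost is a constant factor in length that is absorbed into $\eta_2$.
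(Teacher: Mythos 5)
Your plan fails at Condition~3 in a way that cannot be repaired by tuning the RLDC parameters, and the paper itself flags exactly this obstacle when it says the Kedlaya--Umans construction ``cannot quite be used as a black box.''

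The crux is the quantifier order in the theorem: $\tau$ must be fixed \emph{before} $s$ and $n$, i.e.\ the data structure must tolerate a constant noise rate independent of the instance size. In your decoder, a query $a$ is lost as soon as a \emph{single} one of its $t_{KU}$ RLDC calls lands in the RLDC-bad set $B$. Your double-counting then forces $\lambda_{RLDC}\le\Theta(\lambda/t_{KU})$. But the BGHSV guarantee ties $\lambda_{RLDC}$ to the noise via $\lambda_{RLDC}=c\delta$, so you would need $\delta\le\Theta(\lambda/(c\, t_{KU}))$. Since $t_{KU}=\polylog s\cdot\log^{1+o(1)}n$, the tolerable noise rate $\tau$ shrinks to zero as $s,n$ grow, violating the theorem. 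Amplification of the per-bit success probability cannot help: it reduces the non-$\perp$ error rate but does not shrink the bad set $B$, which is a property of the corrupted codeword, not of the decoder's coin flips. The paper circumvents this by \emph{not} treating the inner decoder as a plain bit-reader: it replaces the exact Chinese Remainder reconstruction of Kedlaya--Umans by a rate-$1/2$ CRT \emph{code} (Theorem~\ref{thm:CRT code}), applied twice, so that a query is lost only when a \emph{constant fraction} of its probe positions are bad. That redundancy is precisely what lets $\lambda_{RLDC}$ (and hence $\delta$) stay a constant, and the paper identifies it as ``the main new ingredient.''

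Your smoothness assumption is a second, independent gap. The Kedlaya--Umans tables store reduced-polynomial evaluations over large domains such as $\Z_p^m$, but queries only ever probe entries of the form $\tilde a_{p}$ with $a\in\Z_n$; many table entries are never probed, and the probed ones are hit with wildly uneven multiplicity across the differently sized tables. Replicating bits and randomizing which copy is read spreads probes of a single position across its copies, but it does not equalize the total probe mass across positions, so the double-counting bound you need does not follow. The paper's fix is structural: it discards unused evaluation points (storing only $B^{p_1,p_2}=\{\tilde a_{p_1,p_2}:a\in\Z_n\}$), pads each table by repetition to a common length $L$, and then proves the near-uniformity it needs directly via Claim~\ref{claim:crt-random} ($\Pr_{a}[\tilde a_{p_1,p_2}=q]\le 4/p_2$). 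Without these modifications there is no smoothness to appeal to, and the noiseless Kedlaya--Umans encoder is simply the wrong inner object for the composition.

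Conditions~1 and~4 of your argument are fine, and the length bookkeeping with $(1+\eta_1)(1+\eta_2)\le 1+\eta$ is the right arithmetic. But the heart of the theorem is Condition~3 with a constant $\tau$, and your route does not get there.
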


\begin{rem}We note that Theorem~\ref{thm:poly eval} easily holds when
$s=(\log n)^{o(1)}$. As we discussed previously, one can just store a
table of the $s+1$ coefficients of $g$. To make this error-correcting, 
encode the entire table by a standard error-correcting code. 
This has length and bit-probe complexity $O(s\log n)=O(\log^{1+o(1)}n$).
\end{rem}

\subsection{Our techniques\label{sub:Our-technique}}

At a high level, for both data structure problems we build
our constructions by composing a relaxed locally decodable code with
an appropriate noiseless data structure. If the underlying probe-accessing
scheme in a noiseless data structure is ``pseudorandom,'' then
the noiseless data structure can be made error-correcting by appropriate
compositions with other data structures. By pseudorandom, we mean
that if a query is chosen uniformly at random from $Q$, then the positions of the probes selected
also ``behave'' as if they are chosen uniformly at random. Such
property allows us to analyze the error-tolerance of our constructions.

More specifically, for the \noun{membership} problem we build upon the noiseless data structure of Buhrman et al.~\cite{bmrv:bitvectorsj}.  While de Wolf~\cite{wolf:ecdata} combined this with LDCs to get a rather long data structure with $\lambda=0$, we will combine it here with RLDCs to get nearly optimal length with small (but non-zero) $\lambda$.  In order to bound $\lambda$ in our new construction, we make use of the fact that the \cite{bmrv:bitvectorsj}-construction is a bipartite \emph{expander graph}, as explained below after Theorem~\ref{thm:bmrv}.  This property wasn't needed in~\cite{wolf:ecdata}. The left side of the expander represents the set of queries, and a neighborhood of a query (a left node) represents the set of possible bit-probes that can be chosen to answer this query. The expansion property of the graph essentially implies that for a random query, the distribution of a bit-probe chosen to answer this query is close to uniform.\footnote{We remark that this is different from the notion of smooth decoding in the LDC literature, which requires that for every {\em fixed} query, each bit-probe by itself is chosen with probability close to uniform (though not independent of the other bit-probes).} 
This property allows us to construct an efficient, error-correcting data structure for this problem.

For the polynomial evaluation problem, we rely upon the noiseless data structure of Kedlaya and Umans~\cite{kedl-umans}, which has a decoding procedure that uses the reconstructive algorithm from the Chinese Remainder Theorem.  
The property that we need is the simple fact that
if $a$ is chosen uniformly at random from $\Z_{n}$, then for any
$m\leq n$, $a$ modulo $m$ is uniformly distributed in $\Z_{m}$.
This implies that for a random evaluation point $a$, the distribution of certain tuples of cell-probes used to answer this evaluation point is close to uniform. This observation allows us to construct an efficient, error-correcting data structure for polynomial evaluation.
Our construction follows the non-error-correcting one of~\cite{kedl-umans}
fairly closely; the main new ingredient is to add redundancy to their Chinese Remainder-based reconstruction by using more primes, which gives us the error-correcting features we need. 

\paragraph{\textbf{Time-complexity of decoding and encoding.} }
So far we have used the number of bit-probes as a proxy for the actual time the decoder needs for query-answering.
This is fairly standard, and usually justified by the fact that the actual time complexity of decoding is 
not much worse than its number of bit-probes.  This is also the case for our constructions.
For \noun{membership}, it can be shown that the decoder uses $O(1)$ probes and $\polylog(n)$ time
(as do the RLDCs of~\cite{bghsv04}).
For \noun{polynomial evaluation}, the decoder uses $\polylog(s)\log^{1+o(1)}(n)$ probes and $\polylog(sn)$ time.

The efficiency of \emph{encoding}, i.e., the ``pre-processing'' of the data into the form of a data structure, for both our error-correcting data structures  \noun{membership} and \noun{polynomial evaluation} depends on the efficiency of encoding of the RLDC constructions in~\cite{bghsv04}.  This is not addressed explicitly there, and needs further study.

\section{The \noun{Membership} problem\label{sec:Membership problem}}

In this section we construct a data structure for the membership problem
$\mem_{n,s}$. First we describe some of the building blocks that
we need to prove Theorem~\ref{thm:membership}. Our first basic building
block is the relaxed locally decodable code of Ben-Sasson et al.~\cite{bghsv04}
with nearly linear length. Using our terminology, we can restate their
result as follows:

\begin{theorem}[BGHSV~\cite{bghsv04}]\label{thm:bghsv}For every
$\eps\in(0,1/2)$ and $\eta>0$, there exist an integer $t>0$
and reals $c>0$ and $\tau>0$, such that for every $n$
and every $\delta\leq\tau$, the membership problem $\mem_{n,n}$
has a $(t,\delta,\eps,c\delta)$-data structure for $\mem_{n,n}$
of length $O(n^{1+\eta})$. \end{theorem}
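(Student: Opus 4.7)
The plan is to build the RLDC by combining a systematic error-correcting code with a PCP of proximity (PCPP) of nearly linear length. Recall that a PCPP for a language $L$ is a proof system in which the verifier reads only $O(1)$ bits of the oracle (the candidate input together with an auxiliary proof string), accepts with probability $1$ when the input lies in $L$ paired with the honest proof, and rejects with probability $\geq 1/2$ when the input is $\tau$-far from $L$. The workhorse I would invoke is the short-PCPP construction developed elsewhere in the same BGHSV paper, which produces PCPPs of length $O(m^{1+\eta})$ with $O(1)$ verifier queries for any language computable by a circuit of size $m$.

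For the construction, I would fix a systematic linear error-correcting code $C: \cube^n \to \cube^{\Theta(n)}$ of constant relative distance, and take the encoding to consist of three parts: a systematic block containing $K = \Theta(n^{\eta})$ independent copies of $x$; the codeword $C(x)$; and a PCPP proof $\pi(x)$ certifying that all $K$ copies agree and that the last block is the image under $C$ of this common value. With the short-PCPP ingredient, the total length is $O(n^{1+\eta})$. The decoder on query $i$ first runs the PCPP verifier $O(\log(1/\eps))$ times for amplification; if any invocation rejects, it outputs $\perp$; otherwise, it picks two uniformly random copies $k, k' \in [K]$, reads bit $i$ of each, and outputs the common value if they agree and $\perp$ otherwise.

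For the analysis, Condition~4 is immediate from perfect completeness of the PCPP and the fact that systematic positions hold $x_i$ exactly. Condition~2 splits into two cases: either the corrupted word is $\tau$-far from the language of valid encodings, in which case the amplified PCPP invocations reject with probability at least $1-\eps$ and the decoder correctly outputs $\perp$; or the word is close, and by the constant relative distance of the combined code it lies in the neighborhood of a unique valid encoding, forcing the bit-level cross-check between two random copies to return $x_i$ or $\perp$. Condition~3 is a counting argument: at most $\delta N$ bits are corrupted and each index $i$ has $K$ systematic copies, so a Markov-style bound shows that the number of indices for which a uniformly random copy is corrupted with probability exceeding $\eps$ is at most $\delta N / (\eps K) = O(\delta n / \eps)$, and absorbing $1/\eps$ into the constant yields $\lambda \leq c\delta$.

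The main obstacle is producing the short PCPP itself; this is the technical centerpiece of the BGHSV paper and is established via iterated composition of constant-query robust PCPs with small PCPPs for simpler languages, using polynomial (Reed--Muller style) verifiers. Once this ingredient is in hand, the remaining work reduces to calibrating the number of copies $K$, the PCPP soundness threshold, and the number of repetitions so as to obtain uniform constants $t$, $c$, and $\tau$ depending only on $\eps$ and $\eta$, and to carefully ensuring that the relative distance of the combined three-part encoding is large enough that a corruption fraction up to $\tau$ still lies in the unique-decoding radius of a single valid encoding.
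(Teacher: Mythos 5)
The paper does not prove this statement; it is a verbatim restatement (in the paper's data-structure terminology) of the relaxed-LDC construction of Ben-Sasson, Goldreich, Harsha, Sudan, and Vadhan~\cite{bghsv04}, and the paper simply cites it. So the comparison here is between your blind reconstruction and the original BGHSV argument. Your high-level blueprint --- a systematic code together with a nearly-linear-length PCP of proximity, with the verifier's rejection interpreted as $\perp$ --- is indeed the right family of ideas, and your handling of Conditions~1, 3 and~4 is plausible. However, your decoder does not satisfy Condition~2, and the failure is not a matter of constants.

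Consider an adversary that flips bit $i$ in every one of the $K = \Theta(n^{\eta})$ systematic copies and touches nothing else. This costs $K$ bit-flips out of $N = \Theta(n^{1+\eta})$, a fraction on the order of $1/n$, far below any constant threshold $\tau$. The corrupted word is therefore $O(1/n)$-close to the honest encoding of $x$, so a global PCPP whose proximity parameter is a fixed constant accepts with high probability regardless of how many independent repetitions you run. Your cross-check then reads two random copies of position $i$, both of which carry the flipped value; they agree, and the decoder outputs $1-x_i$, which lies outside $\{x_i,\perp\}$. Hence $\Pr[\D^w(i)\in\{x_i,\perp\}]$ is close to $0$ for this $i$, directly violating Condition~2. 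Invoking constant relative distance or the unique-decoding radius does not help: those are global statements, whereas the decoder's view is $O(1)$ bits, and a PCPP on the whole string is necessarily blind to a $1/n$-fraction corruption that is nonetheless fatal at a single coordinate. The actual BGHSV construction avoids this by attaching PCPPs of proximity to small pieces of the encoding (essentially one per message position or per small block, proving consistency of the stored value with a fragment of the outer codeword), arranged so that a corrupted stored copy of $x_i$ constitutes a \emph{constant fraction} of that local PCPP's input and is therefore rejected with constant probability; balancing these many local proofs so the total length stays $O(n^{1+\eta})$ is precisely the delicate part of their argument and the missing ingredient in yours. A secondary, more minor issue: a PCPP of length $m^{1+\eta}$ over a statement of size $m\approx Kn = n^{1+\eta}$ has length about $n^{(1+\eta)^2}$, so the exponent must be re-calibrated; but this is easily fixed once the soundness gap is repaired.
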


Note that by picking the error-rate $\delta$ a sufficiently small
constant, one can set $\lambda=c\delta$ (the fraction of unrecoverable queries) to be very close to $0$.

The other building block that we need is the following one-probe data
structure of Buhrman et al.~\cite{bmrv:bitvectorsj}.

\begin{theorem}[BMRV~\cite{bmrv:bitvectorsj}]\label{thm:bmrv}
For every $\eps\in(0,1/2)$ and for every positive integers $s\leq n$,
there is an $(1,0,\eps,0)$-data structure for $\mem_{n,s}$ of length
$m=\frac{100}{\eps^{2}}s\log n$ bits. \end{theorem}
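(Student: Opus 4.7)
The plan is to prove Theorem~\ref{thm:bmrv} by combining a trivial one-probe decoder with a carefully chosen bipartite ``storage graph.'' Fix a bipartite graph $G = ([n], [m], E)$ in which every left-vertex has degree $d$, and let $N(i) \subseteq [m]$ denote the neighbor set of $i$. The decoder, on query $i \in [n]$, picks a uniformly random $j \in N(i)$ and returns $w_j$. The encoder, given $S \subseteq [n]$ with $|S| \le s$, produces $w = \E(S) \in \cube^m$ by the rule $w_j = 1$ iff $j \in N(S) := \bigcup_{i \in S} N(i)$. With this encoding the decoder is always correct for $i \in S$ (since $N(i) \subseteq N(S)$ yields $w_j = 1$), while for $i \notin S$ the error probability is exactly $|N(i) \cap N(S)|/d$.

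Thus the entire theorem reduces to exhibiting a graph $G$ with the \emph{small overlap} property $|N(i) \cap N(S)| \le \eps d$ for every $S$ of size at most $s$ and every $i \notin S$. I would establish the existence of such $G$ by the probabilistic method: take each left-vertex's $d$ right-neighbors to be independent uniform samples from $[m]$, with $d$ of order $\log n/\eps$ so that $m = 100 s \log n/\eps^2$ yields $sd/m = \Theta(\eps)$. Then for a fixed pair $(S, i)$ with $i \notin S$, one has $\mathbb{E}[|N(i) \cap N(S)|] \le d \cdot |N(S)|/m \le s d^2/m = \Theta(\eps d)$, and a Chernoff bound on the sum of the $d$ neighborhood-indicator variables gives an exponentially small deviation probability, which one then tries to survive by a union bound over the $\binom{n}{s} \cdot n$ pairs $(S, i)$.

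The main obstacle is the parameter balance in this Chernoff-plus-union-bound step: a naive application forces $d$ to grow linearly in $s$, inflating the length $m$ by an extra factor of $s$ beyond the advertised $100 s \log n / \eps^2$. Overcoming this requires a sharper analysis---for instance, invoking the Lov\'asz Local Lemma to exploit the limited mutual dependence of the bad overlap events, or replacing the blunt ``support $= N(S)$'' rule with a subtler encoding in which each $i \in S$ claims roughly $(1-\eps)d$ \emph{private} neighbors via a Hall-type matching argument in the induced bipartite subgraph on $(S, N(S))$. Either refinement should weaken the required expansion condition on $G$ enough for the union bound to close with the stated parameters and deliver a length of exactly $\frac{100}{\eps^2} s \log n$ bits.
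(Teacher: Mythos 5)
There is a genuine gap, and it lies in the encoding rule itself, not just in the union-bound bookkeeping. The rule ``set $w_j=1$ iff $j\in N(S)$'' forces you to prove the \emph{per-vertex} overlap bound $|N(i)\cap N(S)|\le\eps d$ for \emph{every} $S$ with $|S|\le s$ and \emph{every} $i\notin S$, and that condition is strictly stronger than lossless expansion and is not achievable at length $m=\Theta(s\log n/\eps^{2})$. To see the separation: take a graph that is a $(2s,\,(1-\eps/4)d)$-lossless expander in which each of $s$ vertices $i_1,\dots,i_s\in S$ shares $\approx 2\eps d/s$ disjoint neighbors with a fixed $i_0\notin S$, and the $i_k$'s otherwise have disjoint neighborhoods. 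One checks that $|\Gamma(S\cup\{i_0\})|\ge(1-\eps/4)(s+1)d$ is satisfied for $s\ge 7$, yet $|N(i_0)\cap N(S)|\approx 2\eps d>\eps d$, so the decoder errs on $i_0$ with probability $>\eps$. Thus even a near-perfect lossless expander does not give your overlap property, and your encoding cannot deliver $\lambda=0$; at best the ``bad'' set $B=\{i\notin S:|N(i)\cap N(S)|>\eps d\}$ can be bounded (expansion gives $|B|=O(s)$), which is a $\lambda>0$ statement, not the theorem as stated. Moreover, if one \emph{does} insist on the per-vertex overlap property for a random graph, the Chernoff-plus-union-bound calculation you sketch really does force $m=\Omega(s^{2}\log n/\eps^{2})$, an extra factor $s$; this is not a deficiency of the analysis but of the combinatorial requirement you imposed.

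Neither of your two suggested rescues closes the gap. The Lov\'asz Local Lemma does not apply cleanly: the bad event for a pair $(S,i)$ depends on the neighbor choices of all $s+1$ vertices in $S\cup\{i\}$, and these events overlap massively, so there is no useful bounded-dependency structure. The Hall-type ``private neighbors'' idea does shrink the support of the all-ones set, but that only helps queries $i\in S$ (and even there you would need a per-vertex unique-neighbor guarantee, which lossless expansion gives only on average over $T$, not for every $i\in T$); for $i\notin S$ the set of ones is still a subset of $N(S)$ and the overlap bound you need is unchanged. What BMRV actually prove is subtler: they establish that a $(2s,\,(1-\eps/2)d)$-lossless expander on $m=O(s\log n/\eps^{2})$ right vertices admits, for \emph{every} $S$ with $|S|\le s$, some assignment $y\in\{0,1\}^{m}$ (not in general the indicator of $N(S)$) such that every left vertex $i$ has at least $(1-\eps)d$ neighbors $j$ with $y_j=\chi_S(i)$. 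The existence of $y$ is derived from the expansion property itself, via an argument that simultaneously controls both sides (members and non-members); it is not obtained by analyzing a fixed, $S$-oblivious rule like yours. The essential missing idea is precisely this two-sided, expansion-driven existence argument for $y$; without it your approach cannot reach $\lambda=0$ at the advertised length.
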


\emph{Properties of the BMRV encoding:} The encoding can be represented
as a bipartite graph $\mathcal{G}=(L,R,E)$ with $|L|=n$ left vertices
and $|R|=m$ right vertices, and regular left degree $d=\frac{\log n}{\eps}$.
$\mathcal{G}$ is an expander graph: for each set $S\subseteq L$ with
$|S|\leq2s$, its neighborhood $\Gamma(S)$ satisfies $|\Gamma(S)|\geq\left(1-\frac{\eps}{2}\right)|S|d$.
For each assignment of bits to the left vertices with at most $s$
ones, the encoding specifies an assignment of bits to the right vertices.
In other words, each $x\in\cube^{n}$ of weight $|x|\leq s$ corresponds
to an assignment to the left vertices, and the $m$-bit encoding of
$x$ corresponds to an assignment to the right vertices.

For each $i\in[n]$ we write $\Gamma_{i}:=\Gamma(\{i\})$ to denote
the set of neighbors of $i$. A crucial property of the encoding function
$\E_{bmrv}$ is that for every $x$ of weight $|x|\leq s$, for each
$i\in[n]$, if $y=\E_{bmrv}(x)\in\{0,1\}^{m}$ then $\Pr_{j\in\Gamma_{i}}[x_{i}=y_{j}]\geq1-\eps$.
Hence the decoder for this data structure can just probe a random
index $j\in\Gamma_{i}$ and return the resulting bit $y_{j}$. Note
that this construction is not error-correcting at all, since $|\Gamma_{i}|$
errors in the data structure suffice to erase all information about
the $i$-th bit of the encoded $x$.\qed

\medskip

As we mentioned in the Section~\ref{sub:Our results}, by combining
the BMRV encoding with the data structure for $\mem_{n,n}$ from Theorem~\ref{thm:bghsv},
one easily obtains an $(O(1),\delta,\eps,O(\delta))$-data structure 
for $\mem_{n,s}$ of length $O((s\log n)^{1+\eta})$.
However, we can give an even more efficient, error-correcting data
structure of length $O(s^{1+\eta}\log n)$. Our improvement follows 
an approach taken in de~Wolf~\cite{wolf:ecdata}, which
we now describe. For a vector $x\in\cube^{n}$ with $|x|\leq s$,
consider a BMRV structure encoding $20n$ bits into $m$ bits. Now,
from Section~2.3 in~\cite{wolf:ecdata}, the following ``balls
and bins estimate'' is known:

\begin{prop}[From~\cite{wolf:ecdata}]\label{prop:balls vs bins}
For every positive integers $s\leq n$, the BMRV bipartite graph $\mathcal{G}=([20n],[m],E)$
for $\mem_{20n,s}$ with error parameter $\frac{1}{10}$ has the following
property: there exists a partition of $[m]$ into $b=10\log(20n)$
disjoint sets $B_{1},\ldots,B_{b}$ of $10^{3}s$ vertices each, such
that for each $i\in[n]$, there are at least $\frac{b}{4}$ sets $B_{k}$
satisfying $|\Gamma_{i}\cap B_{k}|=1$. \end{prop}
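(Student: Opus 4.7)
The plan is to apply the probabilistic method: sample a uniformly random equal-size partition $B_1,\ldots,B_b$ of $[m]$ into blocks of size $10^3 s = m/b$, and show that with positive probability $X_i := |\{k : |\Gamma_i \cap B_k|=1\}| \geq b/4$ for every $i \in [n]$ simultaneously. First I would double-check the arithmetic. With $\eps = 1/10$, the BMRV graph (Theorem~\ref{thm:bmrv} applied to $\mem_{20n,s}$) has $m = \frac{100}{\eps^{2}}\,s\log(20n) = 10^{4}s\log(20n) = b\cdot 10^{3}s$, and left-degree $d = \log(20n)/\eps = 10\log(20n) = b$, so an equal-size partition into the stated block sizes exists and each $i$ has exactly $b$ neighbors to distribute among the $b$ blocks.

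Next I would compute $\mathbb{E}[X_i]$. Under a uniformly random equal-size partition the multiset of blocks visited by $\Gamma_i$ is, up to without-replacement corrections of relative order $1/s$, that of a classical balls-and-bins experiment with $b$ balls dropped uniformly into $b$ equally-sized bins. A direct calculation gives
\[
\Pr[|\Gamma_i \cap B_1|=1] \geq (1-1/b)^{b-1}(1-o(1)),
\]
hence $\mathbb{E}[X_i] \geq (1-o(1))\,b/e$, comfortably exceeding $b/3$ for $n$ large.

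For the tail bound I would pass to the Poissonized model in which each right vertex is assigned an iid uniform label in $[b]$: then $X_i$ depends only on the $b$ iid labels attached to $\Gamma_i$, and altering any one label changes $X_i$ by at most $2$. McDiarmid's inequality, or more sharply the Chernoff-type bound available for the singleton statistic in balls-and-bins (which is known to exhibit Gaussian-type concentration around its mean $\sim b/e$), yields $\Pr[X_i < b/4] \leq \exp(-\Omega(b)) = (20n)^{-\Omega(1)}$. A union bound over $i \in [n]$ then delivers, with positive probability, a random coloring that is good for every $i$ simultaneously; conditioning on every color class having size exactly $10^{3}s$ (an event of at least inverse-polynomial probability) recovers an equal-size partition satisfying the claim.

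The main obstacle is calibrating the concentration exponent to beat the union-bound factor of $n$: since $b = \Theta(\log n)$, the bound $\exp(-\Omega(b))$ is only $n^{-\Omega(1)}$, and one must ensure the hidden constant exceeds $1$. Verifying that the specific constants $10$ (in $b$) and $10^{3}$ (in the block size) suffice requires either invoking a sharp tail bound for the singleton statistic---using that its variance is $\Theta(b)$, so a Bernstein-type inequality gives the required exponent---or enlarging these constants slightly, which does not affect the downstream use of the proposition in the proof of Theorem~\ref{thm:membership}.
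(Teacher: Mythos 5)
The approach you take — random equal-size partition, balls-and-bins reduction, concentration plus a union bound over $i\in[n]$ — is almost certainly the route the cited result follows, and the arithmetic check (each $i$ has degree $d=10\log(20n)=b$, so one throws $b$ balls into $b$ bins of size $10^3 s$, giving $\mathbb{E}[X_i]\approx b/e$) is set up correctly. The problem is that you yourself flag the decisive step — "calibrating the concentration exponent to beat the union-bound factor of $n$" — and then wave it through with a claim that a Bernstein-type bound delivers the required exponent. It does not, and the gap is quantitatively severe, not a matter of sharpening constants.

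Concretely, with $b=10\log_2(20n)$ you need $\Pr[X_i<b/4]<1/n$, i.e.\ a concentration exponent strictly larger than $\ln 2/10\approx 0.069$ per unit of $b$. But the \emph{optimal} large-deviation exponent for a $\mathrm{Bin}(b,1/e)$ to fall below $b/4$ is the KL divergence
\[
D\Bigl(\tfrac14\,\Big\|\,\tfrac1e\Bigr)=\tfrac14\ln\tfrac{e}{4}+\tfrac34\ln\tfrac{3/4}{1-1/e}\approx 0.032,
\]
so even the tightest Chernoff bound only gives $\Pr[X_i<b/4]\le e^{-0.032b}=(20n)^{-0.46}$, and $n\cdot(20n)^{-0.46}\to\infty$. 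McDiarmid with Lipschitz constant $2$ is far worse, giving exponent $t^2/(2\cdot 4b)\approx 0.002$ at $t=0.12b$; Bernstein with $\mathrm{Var}(X_i)\approx 0.23b$ and unit-bounded summands gives roughly $0.026$. None of these close a factor-of-two shortfall, and the Binomial/Poissonized model is not pessimistic here — the exact equal-partition (hypergeometric) model only tightens the variance by lower-order terms, and your own de-Poissonization step (conditioning on all color classes being exactly balanced) costs you a further poly factor that only hurts the union bound. So the assertion "a Bernstein-type inequality gives the required exponent" is false, and the alternative "enlarge the constants slightly" does not prove the proposition as stated — changing $b$ also changes $d/b$ and moves $\mathbb{E}[X_i]/b=(d/b)(1-1/b)^{d-1}$ \emph{away} from its maximum at $d=b$, so there is no easy knob to turn.

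In short, the high-level plan is right and matches the reference, but the concentration step you hand-wave is exactly where the argument must (and, by the numbers above, cannot, via off-the-shelf tail bounds with these constants) succeed; the proof is incomplete as written, and one has to either find a genuinely sharper tail bound for the singleton statistic (not a generic Bernstein/McDiarmid bound), exploit additional structure of the BMRV expander beyond the balls-and-bins reduction, or prove a quantitatively weaker statement (e.g.\ $\geq b/8$ singletons, where $D(1/8\,\|\,1/e)\approx 0.15$ comfortably beats $0.069$) and check it still suffices downstream.
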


Proposition~\ref{prop:balls vs bins} suggests the following encoding
and decoding procedures. To encode $x$, we rearrange the $m$ bits
of $\E_{bmrv}(x)$ into $\Theta(\log n)$ disjoint blocks of $\Theta(s)$
bits each, according to the partition guaranteed by Proposition~\ref{prop:balls vs bins}.
Then for each block, encode these bits with the error-correcting
data structure (RLDC) from Theorem~\ref{thm:bghsv}. Given a received word $w,$ to decode $i\in[n]$, pick a block $B_{k}$
at random. With probability at least $\frac{1}{4}$, $\Gamma_{i}\cap B_{k}=\{j\}$
for some $j$. Run the RLDC decoder to decode the $j$-th bit of the
$k$-th block of $w$. Since most blocks don't have much higher error-rate
than the average (which is at most $\delta$), with high probability
we recover $\E_{bmrv}(x)_{j}$, which equals $x_{i}$ with high probability.
Finally, we will argue that most queries do not receive a blank symbol
$\perp$ as an answer, using the expansion property
of the BMRV encoding structure. We now proceed with a formal proof
of Theorem~\ref{thm:membership}.

\begin{proof}[Proof of Theorem~\ref{thm:membership}]We only construct
an error-correcting data structure with error probability $0.49$.
By a standard amplification technique 
we can reduce the error probability to any other positive constant (i.e., repeat the decoder $O(\log(1/\eps))$ times).

By Theorem~\ref{thm:bmrv}, there exists an encoder $\E_{bmrv}$
for an $(1,0,\frac{1}{10},0)$-data structure for the membership problem
$\mem_{20n,s}$ of length $m=10^{4}s\log(20n)$. Let $s'=10^{3}s$.
By Theorem~\ref{thm:bghsv}, for every $\eta>0$, for some $t=O(1)$,
and sufficiently small $\delta$, $\mem_{s',s'}$ has an $(t,10^{5}\delta,\frac{1}{100},O(\delta))$-data structure of length $s''=O(s'^{1+\eta})$. Let $\E_{bghsv}$
and $\D_{bghsv}$ be its encoder and decoder, respectively.

\paragraph{\textbf{Encoding.}}

Let $B_{1},\ldots,B_{b}$ be a partition of $[m]$ as guaranteed by
Proposition~\ref{prop:balls vs bins}. For a string $w\in\cube^{m}$,
we abuse notation and write $w=w_{B_{1}}\cdots w_{B_{b}}$ to denote
the string obtained from $w$ by applying the permutation on $[m]$
according to the partition $B_{1},\ldots,B_{b}$. In other words,
$w_{B_{k}}$ is the concatenation of $w_{i}$ where $i\in B_{k}$.
We now describe the encoding process.

Encoder $\E$: on input $x\in\cube^{n}$, $|x|\leq s$, 
\begin{enumerate}
\item Let $y=\E_{bmrv}\left(x0^{19n}\right)$ and write $y=y_{B_{1}}\ldots y_{B_{b}}$. 
\item Output the concatenation $\E(x)=\E_{bghsv}\left(y_{B_{1}}\right)\ldots\E_{bghsv}\left(y_{B_{b}}\right)$. 
\end{enumerate}
The length of $\E(x)$ is $N=b\cdot O(s'^{1+\eta})=O(s^{1+\eta}\log n)$.

\paragraph{\textbf{Decoding.}}

Given a string $w\in\cube^{N}$, we write $w=w^{(1)}\ldots w^{(b)}$,
where for $k\in[b]$, $w^{(k)}$ denotes the $s''$-bit string $w_{s''\cdot(k-1)+1}\ldots w_{s''\cdot k}$.

Decoder $\D$: on input $i$ and with oracle access to a string $w\in\cube^{N}$, 
\begin{enumerate}
\item Pick a random $k\in[b]$. 
\item If $|\Gamma_{i}\cap B_{k}|\neq1$, then output a random bit. \\
 Else, let $\Gamma_{i}\cap B_{k}=\{j\}$. Run and output the answer
given by the decoder $\D_{bghsv}(j)$, with oracle access to the $s''$-bit
string $w^{(k)}$. 
\end{enumerate}
\textbf{Analysis.} Fix $x\in{D}$ and $w\in\cube^{N}$ such that $\Delta(w,\E(x))\leq\delta N$,
where $\delta$ is less than some small constant $\tau$ to be specified later.
We now verify the four conditions of Definition~\ref{def:data structure}.
For Condition~$1$, note that the number of probes the decoder $\D$
makes is the number of probes the decoder $\D_{bghsv}$ makes, which
is at most $t$, a fixed integer.

We now examine Condition~$2$. Fix $i\in[n]$. By Markov's inequality,
for a random $k\in[b]$, the probability that the relative Hamming
distance between $\E\left(y_{B_{k}}\right)$ and $w^{(k)}$ is greater
than $10^{5}\delta$ is at most $10^{-5}$. If $k$ is chosen such
that the fraction of errors in $w^{(k)}$ is at most $10^{5}\delta$
and $\Gamma_{i}\cap B_{k}=\{j\}$, then with probability at least
$0.99$, $\D_{bghsv}$ outputs $y_{j}$ or $\perp$. Let $\beta\geq\frac{1}{4}$
be the fraction of $k\in[b]$ such that $|\Gamma_{i}\cap B_{k}|=1$.
Then \begin{equation}
\Pr[\D(i)\in\{x_{i},\perp\}]\geq(1-\beta)\frac{1}{2}+\beta\frac{99}{100}-\frac{1}{10^{5}}>0.624.\label{eq:cond2}\end{equation}

To prove Condition~$3$, we need the expansion property of the BMRV
structure, as explained after Theorem~\ref{thm:bmrv}. 
For $k\in[b]$, define $G_{k}\subseteq B_{k}$ so that
$j\in G_{k}$ if $\Pr\left[\D_{bghsv}^{w^{(k)}}(j)=y_{j}\right]\geq0.99$.
In other words, $G_{k}$ consists of indices in block $B_{k}$ that
are answered correctly by $\D_{bghsv}$ with high probability. By Theorem~\ref{thm:bghsv},
if the fraction of errors in $w^{(k)}$ is at most $10^{5}\delta$,
then $|G_{k}|\geq(1-c\delta)|B_{k}|$ for some fixed constant
$c$. Set $A=\cup_{k\in[b]}B_{k}\backslash G_{k}$, Since we showed
above that for a $(1-10^{-5})$-fraction of $k\in[b]$, the fractional
number of errors in $w^{(k)}$ is at most $10^{5}\delta$, we have
$|A|\leq c\delta m+10^{-5}m$.

Recall that the BMRV expander has left degree $d=10\log(20n)$. Take
$\delta$ small enough that $|A|<\frac{1}{40}sd$; this determines
the value of $\tau$ of the theorem. We need to show that for any
such small set $A$, most queries $i\in[n]$ are answered correctly
with probability at least 0.51. It suffices to show that for most
$i$, most of the set $\Gamma_{i}$ falls outside of $A$. To this
end, let $B(A)=\{i\in[n]:|\Gamma_{i}\cap A|\geq\frac{d}{10}\}$. We
show that if $A$ is small then $B(A)$ is small.

\begin{claim} For every $A\subseteq[m]$ with $|A|<\frac{sd}{40}$,
it is the case that $|B(A)|<\frac{s}{2}.$ \end{claim}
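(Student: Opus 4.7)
The plan is to argue by contradiction using the expansion property of the BMRV graph $\mathcal{G}$, as recalled after Theorem~\ref{thm:bmrv}: for every $S \subseteq L$ with $|S| \leq 2s$, one has $|\Gamma(S)| \geq (1 - \eps/2)|S|d = \frac{19}{20}|S|d$, where $\eps = 1/10$ is the error parameter used in Proposition~\ref{prop:balls vs bins}. Suppose toward a contradiction that $|B(A)| \geq s/2$, and pick any subset $S \subseteq B(A)$ with $|S| = \lceil s/2 \rceil$. Since $|S| \leq 2s$, the expansion bound applies.

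Next I would double-count edges between $S$ and $A$. On the one hand, by the definition of $B(A)$, every $i \in S$ contributes at least $d/10$ edges to $A$, so the number of edges from $S$ to $A$ is at least $|S|d/10$. Therefore the number of edges from $S$ to $R \setminus A$ is at most $|S|d - |S|d/10 = \frac{9}{10}|S|d$. Since every vertex of $\Gamma(S) \setminus A$ is hit by at least one edge from $S$, this yields $|\Gamma(S) \setminus A| \leq \frac{9}{10}|S|d$.

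Combining this with the expansion bound gives
\[
|\Gamma(S) \cap A| \;\geq\; |\Gamma(S)| - |\Gamma(S) \setminus A| \;\geq\; \frac{19}{20}|S|d - \frac{9}{10}|S|d \;=\; \frac{|S|d}{20}.
\]
On the other hand, trivially $|\Gamma(S) \cap A| \leq |A| < \frac{sd}{40}$. Putting the two inequalities together yields $|S| < s/2$, contradicting our choice $|S| = \lceil s/2 \rceil \geq s/2$. Hence $|B(A)| < s/2$.

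I do not expect any real obstacle here: the hardest part is simply keeping straight which constants come from the expansion ($19/20$), from the definition of $B(A)$ ($d/10$), and from the hypothesis on $|A|$ ($sd/40$). These constants are tuned precisely so that the two lower bound and upper bound on $|\Gamma(S) \cap A|$ collide at exactly the threshold $|S| = s/2$, so the argument only works because the constants in the definition of $B(A)$ and in the hypothesis on $|A|$ were chosen with the $19/20$ expansion factor in mind.
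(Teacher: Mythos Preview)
Your argument is correct and is essentially the same as the paper's: both assume a subset of $B(A)$ of size $s/2$ (the paper calls it $W$), apply the $(1-\tfrac{1}{20})$ expansion bound, use that each vertex in the subset has at most $\tfrac{9}{10}d$ neighbors outside $A$, and combine with $|A|<\tfrac{sd}{40}$ to reach a contradiction. The only cosmetic difference is that the paper upper- and lower-bounds $|\Gamma(W)|$ directly, whereas you rearrange to bound $|\Gamma(S)\cap A|$; these are the same inequalities.
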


\begin{proof} Suppose, by way of contradiction, that $B(A)$ contains
a set $W$ of size $s/2$. $W$ is a set of left vertices in the underlying
expander graph $\mathcal{G}$, and since $|W|<2s$, we must have \[
|\Gamma(W)|\geq\left(1-\frac{1}{20}\right)d|W|.\]
 By construction, each vertex in $W$ has at most $\frac{9}{10}d$
neighbors outside $A$. Thus, we can bound the size of $\Gamma(W)$
from above as follows\begin{eqnarray*}
|\Gamma(W)| & \leq & |A|+\frac{9}{10}d|W|\\
 & < & \frac{1}{40}ds+\frac{9}{10}d|W|\\
 & = & \frac{1}{20}d|W|+\frac{9}{10}d|W|\\
 & = & \left(1-\frac{1}{20}\right)d|W|.\end{eqnarray*}
 This is a contradiction. Hence no such $W$ exists and $|B(A)|<\frac{s}{2}$.
\end{proof}

Define $G=[n]\backslash B(A)$ and notice that $|G|>n-\frac{s}{2}$.
It remains to show that each query $i\in G$ is answered correctly
with probability $>0.51$. To this end, we have \begin{eqnarray*}
\Pr[\D(i)=\perp] & \leq & \Pr[\D\mbox{ probes a block with noise-rate}>10^{5}\delta]+\\
 &  & \Pr[\D\mbox{ probes a }j\in A]+\Pr[\D(i)=\perp:\D\mbox{ probes a }j\not\in A]\\
 & \leq & \frac{1}{10^{5}}+\frac{1}{10}+\frac{1}{100}<0.111.\end{eqnarray*}
 Combining with Eq.~(\ref{eq:cond2}), for all $i\in G$ we have
\[
\Pr[\D(i)=x_{i}]=\Pr[\D(i)\in\{x_{i},\perp\}]-\Pr[\D(i)=\perp]\geq0.51.\]
 Finally, Condition~$4$ follows from the corresponding condition
of the data structure for $\mem_{n,n}$. \end{proof}

\section{The \noun{polynomial evaluation} problem\label{sec:polynomial evaluation}}

In this section we prove Theorem~\ref{thm:poly eval}. Given a polynomial
$g$ of degree $s$ over $\Z_{n}$, our goal is to write down a data
structure of length roughly linear in $s\log n$ so that for each
$a\in\Z_{n}$, $g(a)$ can be computed with approximately $\polylog s\cdot\log n$
bit-probes. Our data structure is built on the work of Kedlaya and
Umans~\cite{kedl-umans}. Since we cannot quite use their construction
as a black-box, we first give a high-level overview of our proof,
motivating each of the proof ingredients that we need.

\paragraph{\textbf{Encoding based on reduced polynomials:}}

The most naive construction, by recording $g(a)$ for each $a\in\Z_{n}$,
has length $n\log n$ and answers an evaluation query with $\log n$
bit-probes. As explained in~\cite{kedl-umans}, one can
reduce the length by using the Chinese Remainder Theorem (CRT): If
$P_{1}$ is a collection of distinct primes, then a nonnegative integer
$m<\prod_{p\in P_{1}}p$ is uniquely specified by (and can be reconstructed
efficiently from) the values $[m]_{p}$ for each $p\in P_{1}$, where
$[m]_{p}$ denotes $m\mod p$. 

Consider the value $g(a)$ over $\Z$, which can be bounded above
by $n^{s+2}$, for $a\in\Z_{n}.$ Let $P_{1}$ consist of the first
$\log(n^{s+2})$ primes. For each $p\in P_{1}$, compute the reduced
polynomial $g_{p}:=g\mod p$ and write down $g_{p}(b)$ for each $b\in\Z_{p}$.
Consider the data structure that simply concatenates the evaluation
table of every reduced polynomial. This data structure has length
$|P_{1}|(\max_{p\in P_{1}}p)^{1+o(1)}$, which is $s^{2+o(1)}\log^{2+o(1)}n$
by the Prime Number Theorem (see Fact~\ref{fact:Prime Number Theorem}
in Appendix~\ref{sec:CRT-code}). Note that $g(a)<\prod_{p\in P_{1}}p$. 
So to compute $[g(a)]_{n}$, it suffices to apply CRT to reconstruct
$g(a)$ over $\Z$ from the values $[g(a)]_{p}=g_{p}([a]_{p})$ for
each $p\in P_{1}$. The number of bit-probes is $|P_{1}|\log(\max_{p\in P_{1}}p)$,
which is $s^{1+o(1)}\log^{1+o(1)}n$. 

\paragraph{\textbf{Error-correction with reduced polynomials:}}

The above CRT-based construction has terrible parameters, but it serves
as an important building block from which we can obtain a data structure
with better parameters. For now, we explain how the above CRT-based
encoding can be made error-correcting. One can protect the bits of
the evaluation tables of each reduced polynomial by an RLDC as provided
by Theorem~\ref{thm:bghsv}. However, the evaluation tables can have
non-binary alphabets, and a bit-flip in just one ``entry'' of
an evaluation table can destroy the decoding process. To remedy this,
one can first encode each entry by a standard error-correcting code
and then encode the concatenation of all the tables by an RLDC. This
is encapsulated in Lemma~\ref{lem:non-binary RLDC}, which can be
viewed as a version of Theorem~\ref{thm:bghsv} over non-binary alphabet.
We prove this in Appendix~\ref{sec:Non-binary-alphabets}. 

\begin{lem}\label{lem:non-binary RLDC} Let $f:{D}\times Q\rightarrow\cube^{\ell}$
be a data structure problem. For every $\eps,\eta,\lambda\in(0,1)$,
there exists $\tau\in(0,1)$ such that for every
$\delta\leq\tau$, $f$ has an $(O(1),\delta,\eps,\lambda)$-data
structure of length $O(\left(\ell|Q|\right)^{1+\eta})$. \end{lem}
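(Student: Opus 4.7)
The plan is to reduce to the binary RLDC of Theorem~\ref{thm:bghsv} via code concatenation. Fix an asymptotically good binary code $C : \cube^\ell \to \cube^L$ with $L=O(\ell)$, constant relative distance $\delta_C>0$, and efficient erasure-decoding from any fewer than $\delta_C L$ erasures (e.g., a Justesen code). For $x\in D$, build $M_x\in\cube^{L|Q|}$ by concatenating $C(f(x,q))$ over all $q\in Q$ in a fixed order, and set $\E(x):=\E_{\textrm{BGHSV}}(M_x)$, where $\E_{\textrm{BGHSV}}$ is the RLDC encoder for universe size $L|Q|$ from Theorem~\ref{thm:bghsv}. This yields length $O((L|Q|)^{1+\eta'})=O((\ell|Q|)^{1+\eta})$ by choosing $\eta'$ sufficiently small. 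On query $q$, the decoder invokes $\D_{\textrm{BGHSV}}$ independently on each of the $L$ positions belonging to the $C(f(x,q))$-block, obtains $L$ values in $\{0,1,\perp\}$, treats each $\perp$ as an erasure, and runs $C$'s erasure decoder; it outputs the decoded message if fewer than $\delta_C L/2$ of the positions are erased, and $\perp$ otherwise.

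The analysis splits along Conditions~2 and~3 of Definition~\ref{def:data structure}. Amplify the inner RLDC (by standard majority-vote) so that each per-position error probability is at most $\eps':=\eps/(2L)$; this costs only an $O(\log(L/\eps))$ factor in the inner bit-probe count. For Condition~2, a union bound over the $L$ inner calls shows that with probability at least $1-\eps/2$, every returned value lies in $\{C(f(x,q))_j,\perp\}$, so the word handed to $C$ contains only erasures and no errors; $C$'s erasure decoder then either returns $f(x,q)$ or signals failure, so our decoder outputs $f(x,q)$ or $\perp$. For Condition~3, Theorem~\ref{thm:bghsv} guarantees that at most a $c\delta$-fraction of the $L|Q|$ inner positions lie outside the RLDC's good set. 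By Markov's inequality, fewer than a $\lambda$-fraction of outer queries $q$ have more than $\delta_C L/4$ of their $L$ inner positions outside the RLDC's good set, provided $\tau \le \lambda\delta_C/(4c)$. For each remaining outer query, combining with the event above yields that with probability at least $1-\eps$ the word fed to $C$ differs from $C(f(x,q))$ only in at most $\delta_C L/4$ erasures, which $C$'s erasure decoder recovers uniquely as $f(x,q)$. Condition~4 is immediate from the corresponding property of $\E_{\textrm{BGHSV}}$.

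The main technical point is balancing thresholds between the two codes: the RLDC only guarantees a global $c\delta$-fraction of lost inner positions, and we must convert this into a per-outer-query bound via Markov's inequality, relying on the outer code's constant erasure-tolerance $\delta_C$ to absorb the remaining lost positions within each outer block. With these choices, $\tau$ depends only on $\lambda,\eps,\eta$ and not on $\ell$, as the lemma requires; the decoder makes $O(L \cdot t \cdot \log(L/\eps)) = O(\ell)$ bit-probes up to constants depending on $\eps,\eta,\lambda$, which suffices for all downstream applications.
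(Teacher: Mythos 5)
Your high-level plan — compose an asymptotically good binary inner code with the BGHSV RLDC and use the RLDC's global good-set guarantee plus Markov/averaging to get a per-outer-query good set — is the same as the paper's. The Condition~3 averaging argument (at most a $c\delta$ fraction of inner positions are lost, hence all but a $\lambda$ fraction of outer blocks have few lost positions, provided $\tau$ is small compared to $\lambda\delta_C/c$) matches the paper's almost exactly, and your Condition~4 observation is correct.

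Where you diverge, and where a genuine problem appears, is in how you deal with \emph{errors} (wrong non-$\perp$ bits) among the $L$ inner calls. You use a code with erasure decoding only, so you must make errors essentially never happen; you do this by amplifying the inner RLDC to per-position error $\eps/(2L)$. Two issues follow. First, you describe the amplification as ``standard majority-vote,'' but the inner decoder returns values in $\{0,1,\perp\}$, and a naive majority could be dominated by wrong non-$\perp$ outputs for positions outside the good set (where the decoder may output $\perp$ most of the time and the correct bit only rarely, so that among the non-$\perp$ outputs the wrong bit can win). You need a thresholded rule such as ``output $b\in\{0,1\}$ only if a $(1-2\eps_0)$-fraction of repetitions return $b$, else output $\perp$,'' together with a Chernoff argument that this preserves all four RLDC conditions; as written the amplification step is a gap. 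Second, and more substantively, this amplification costs a $\Theta(\log(L/\eps))=\Theta(\log\ell)$ multiplicative overhead, so your decoder makes $O(\ell\log\ell)$ bit-probes, not $O(\ell)$ ``up to constants depending on $\eps,\eta,\lambda$'' — the $\log L$ factor depends on $\ell$, not only on those parameters, so the final bit-probe claim in your last paragraph is wrong. The paper sidesteps amplification entirely: it uses an inner code that corrects both errors \emph{and} erasures (correcting $e$ errors and $s$ erasures whenever $2e+s<\delta_C L$), runs the RLDC decoder once per position, and bounds the number of wrong non-$\perp$ bits by a direct Markov argument (expected fraction $\leq 1/32$, hence at most a constant fraction with probability $3/4$). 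This gives $O(\ell)$ probes with no $\log$ overhead. Your construction would still suffice for the downstream polynomial-evaluation application because the extra $\log\log$ factor is absorbed into $\polylog s\cdot\log^{1+o(1)}n$, but you should replace the erasure-only inner code with an error-and-erasure decoder (or at least flag the extra $\log\ell$ factor honestly and make the amplification rule precise).
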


To apply Lemma~\ref{lem:non-binary RLDC}, let $D$ be the
set of degree-$s$ polynomials over $\Z_{n}$, $Q$ be the set
of all evaluation points of all the reduced polynomials of $g$ (each specified by a pair $(a,p)$), and
the data structure problem $f$ outputs evaluations of some reduced
polynomial of $g$. 

By itself, Lemma~\ref{lem:non-binary RLDC} cannot guarantee resilience
against noise. In order to apply the CRT to reconstruct $g(a)$, all
the values $\{[g(a)]_{p}:p\in P_{1}\}$ must be correct, which is
not guaranteed by Lemma~\ref{lem:non-binary RLDC}. To fix this,
we add redundancy, taking a larger set of primes than necessary so that the reconstruction
via CRT can be made error-correcting. Specifically, we apply a Chinese
Remainder Code, or CRT code for short, to the encoding process. 

\begin{defn}[CRT code]\label{def:crtcode}Let $p_{1}<p_{2}<\ldots<p_{N}$
be distinct primes, $K<N$, and $T=\prod\limits _{i=1}^{K}p_{i}$.
The \emph{Chinese Remainder Code (CRT code)} with basis $p_{1},\ldots,p_{N}$
and rate $\frac{K}{N}$ over message space $\Z_{T}$ encodes $m\in\Z_{T}$ 
as $\langle[m]_{p_{1}},[m]_{p_{2}},\ldots,[m]_{p_{N}}\rangle$.
\end{defn}

\begin{rem} By CRT, for distinct $m_{1},m_{2}\in\Z_{T}$, their encodings
agree on at most $K-1$ coordinates. Hence the Chinese Remainder Code
with basis $p_{1}<\ldots<p_{N}$ and rate $\frac{K}{N}$ has distance
$N-K+1$. \end{rem}

It is known that good families of CRT code exist and that
unique decoding algorithms for CRT codes (see e.g., \cite{grs:crt-errors})
can correct up to almost half of the distance of the code. The following
statement can be easily derived from known facts, and we include a
proof in Appendix~\ref{sec:CRT-code}. 

\begin{theorem}\label{thm:CRT code}For every positive integer $T$,
there exists a set $P$ consisting of distinct primes, with (1) $|P|=O(\log T),$
and (2) $\forall p\in P,$ $\log T<p<500\log T$, such that a CRT
code with basis $P$ and message space $\Z_{T}$ has rate $\frac{1}{2}$,
and can correct up to a $(\frac{1}{4}-O(\frac{1}{\log\log T}))$-fraction of errors. 
\end{theorem}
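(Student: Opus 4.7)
The plan is to construct $P$ as a collection of distinct primes drawn from the interval $(\log T, 500 \log T)$ and set its size to $N = 2K$, where $K$ is chosen just large enough so that the product of the $K$ smallest such primes exceeds $T$. Writing $\ell = \log T$ and setting $K = \lceil \ell / \log \ell \rceil$, we have $\ell^K \geq T$; since each prime $p \in P$ satisfies $p > \ell$, the product of the $K$ smallest primes is automatically at least $T$, so the CRT code is well-defined on $\Z_T$ with rate exactly $K/N = 1/2$. The Prime Number Theorem (Fact~\ref{fact:Prime Number Theorem}) supplies $\Theta(\ell / \log \ell)$ primes in $(\ell, 500\ell)$, which comfortably exceeds $N = O(\ell / \log \ell) = O(\log T)$ for all sufficiently large $T$; small $T$ can be handled directly.

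For the decoding claim, I would pass through the Mandelbaum (log-weighted) distance, since the chosen primes are not all of the same size. By the remark following Definition~\ref{def:crtcode}, any two distinct codewords disagree in at least $N-K+1$ positions, and the product of the primes at these positions is at least $\prod_{i=1}^{N-K+1} p_i > \ell^{N-K+1}$. This yields a minimum Mandelbaum distance $d_M > (N-K+1)\log \ell$. Conversely, $t$ Hamming errors contribute Mandelbaum weight at most $t \log(500\ell)$. A standard Mandelbaum-style decoder for CRT codes (cf.~\cite{grs:crt-errors}) corrects up to $(d_M - 1)/2$ Mandelbaum-weight errors, so it handles any $t$ Hamming errors satisfying $2t \log(500\ell) < (N-K+1)\log \ell$. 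Rearranging, and using $N = 2K$, yields
\[
\frac{t}{N} \;<\; \frac{N-K+1}{2N} \cdot \frac{\log \ell}{\log(500\ell)} \;=\; \left(\frac{1}{4} + \frac{1}{4K}\right)\left(1 - \frac{\log 500}{\log \log T + \log 500}\right) \;=\; \frac{1}{4} - O\!\left(\frac{1}{\log \log T}\right),
\]
since the $1/K = \Theta(\log \log T / \log T)$ term is absorbed.

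The main obstacle I anticipate is pinning down the precise decoding guarantee of the invoked CRT decoding algorithm in the form needed here: the literature typically states the correction radius either in Mandelbaum distance or under the simplifying assumption that all moduli are roughly equal, so the Hamming-to-Mandelbaum conversion via the ratio $\log p_1 / \log p_N$ has to be spelled out carefully. The constant $500$ plays no special role beyond ensuring that $(\ell, 500\ell)$ contains more than $2\lceil \ell / \log \ell \rceil$ primes by PNT; any sufficiently large absolute constant would serve equally well.
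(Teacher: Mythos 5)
Your approach is essentially the same as the paper's: choose $\Theta(\log T/\log\log T)$ primes in the window $(\log T, 500\log T)$ --- the paper concretely takes the $K$-th through $(3K{-}1)$-th primes with $K = \lfloor 12\log T/\log\log T\rfloor$, using Fact~\ref{fact:Prime Number Theorem} to verify they land in the window --- set $N = 2K$, and invoke the GRS unique-decoding algorithm. The obstacle you flag dissolves because the paper cites the GRS result (Theorem~\ref{thm:CRT decoding}) already in Hamming-error form: the decoder corrects up to $\frac{\log p_1}{\log p_1+\log p_N}(N-K)$ errors, and plugging in $N=2K$ together with $\log p_1 > \log\log T$ and $\log p_N < \log 500 + \log\log T$ immediately gives the $\frac{1}{4} - O\bigl(\frac{1}{\log\log T}\bigr)$ fraction with no Mandelbaum-distance conversion needed. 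Your Mandelbaum route, if pinned down, would give the slightly weaker bound $\frac{1}{2}\cdot\frac{\log p_1}{\log p_N}$ in place of $\frac{\log p_1}{\log p_1 + \log p_N}$ --- same asymptotics --- so the argument would close; the ``precise decoding guarantee'' you worried about is exactly what Theorem~\ref{thm:CRT decoding} packages for you.
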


We apply Theorem~\ref{thm:CRT code} to a message space of size $n^{s+2}$ to obtain 
a set of primes $P_{1}$ with the properties described above.
Note that these primes are all within a constant
factor of one another, and in particular, the evaluation table of
each reduced polynomial has the same length, up to a constant factor.
This fact and Lemma~\ref{lem:non-binary RLDC} will ensure that our CRT-based
encoding is error-correcting.

\paragraph{\textbf{Reducing the bit-probe complexity:}}

We now explain how to reduce the bit-probe complexity of the CRT-based
encoding, using an idea from~\cite{kedl-umans}. Write $s=d^{m}$, where
$d=\log^{C}s$, $m=\frac{\log s}{C\log\log s}$, and $C>1$ is a sufficiently
large constant. Consider the following multilinear extension map $\psi_{d,m}:\Z_{n}[X]\rightarrow\Z_{n}[X_{0},\ldots,X_{m-1}]$
that sends a univariate polynomial of degree at most $s$ to an $m$-variate
polynomial of degree less than $d$ in each variable. For every $i\in[s]$,
write $i=\sum_{j=0}^{m-1}i_{j}d^{j}$ in base $d$. Define $\psi_{d,m}$
which sends $X^{i}$ to $X_{1}^{i_{0}}\cdots X_{m}^{i_{m-1}}$ and extends
multilinearly to $\Z_{n}[X]$. 

To simplify our notation, we write $\tilde{g}$ to denote the multivariate
polynomial $\psi_{d,m}(g)$. For every $a\in\Z_{n}$, define $\tilde{a}\in\Z_{n}^{m}$
to be $([a]_{n},[a^{d}]_{n},[a^{d^{2}}]_{n},\ldots,[a^{d^{m-1}}]_{n})$.
Note that for every $a\in\Z_{n}$, $g(a)=\tilde{g}(\tilde{a})$ (mod $n$). Now
the trick is to observe that the total degree of the multilinear polynomial
$\tilde{g}$ is less than the degree of the univariate polynomial $g$,
and hence its maximal value over the integers is much reduced.
In particular, for every $a\in\Z_{n}^{m}$, the value $\psi_{d,m}(g)(a)$
over the integers is bounded above by $d^{m}n^{dm+1}$.

We now work with the reduced polynomials of $\tilde{g}$ for our encoding.
Let $P_{1}$ be the collection of primes guaranteed by Theorem~\ref{thm:CRT code}
when $T_{1}=d^{m}n^{dm+1}$. For $p\in P_{1}$, let $\tilde{g}_{p}$
denote $\tilde{g}\mod p$ and $\tilde{a}_{p}$ denote the point $([a]_{p},[a^{d}]_{p},\ldots,[a^{d^{m-1}}]_{p})$.
Consider the data structure that concatenates the evaluation table
of $\tilde{g}_{p}$ for each $p\in P_{1}$. For each $a\in\Z_{n}$,
to compute $g(a)$, it suffices to compute $\tilde{g}(\tilde{a})$
over $\Z$, which by Theorem~\ref{thm:CRT code} can be reconstructed
(even with noise) from the set $\{\tilde{g}_{p}(\tilde{a}_{p}):p\in P_{1}\}$. 

Since the maximum value of $\tilde{g}$ is at most $T_{1}=d^{m}n^{dm+1}$
(whereas the maximum value of $g$ is at most $d^{m}n^{d^{m}+1}$),
the number of primes we now use is significantly less. This effectively
reduces the bit-probe complexity. In particular, each evaluation query
can be answered with $|P_{1}|\cdot\max_{p\in P_{1}}\log p=(dm\log n)^{1+o(1)}$
bit-probes, which by our choice of $d$ and $m$ is equal to $\polylog s\cdot\log^{1+o(1)}n$.
However, the \emph{length} of this encoding is still far from the information-theoretically
optimal $s\log n$ bits. We shall explain how to reduce the length,
but since encoding with multilinear reduced polynomials introduces
potential complications in error-correction, we first explain how
to circumvent these complications.

\paragraph{\textbf{Error-correction with reduced multivariate polynomials:}}

There are two complications that arise from encoding with reduced
multivariate polynomials. The first is that not all the points in
the evaluation tables are used in the reconstructive CRT algorithm.
Lemma~\ref{lem:non-binary RLDC} only guarantees that most of the
entries of the table can be decoded, not all of them. So if the entries
that are used in the reconstruction via CRT are not decoded by Lemma~\ref{lem:non-binary RLDC},
then the whole decoding procedure fails. 

More specifically, to reconstruct $\tilde{g}(\tilde{a})$ over $\Z_{n}$,
it suffices to query the point $\tilde{a}_{p}$ in the evaluation
table of $\tilde{g}_{p}$ for each $p\in P_{1}$. Typically the set $\{\tilde{a}_{p} : a\in\Z_{n}\}$ 
will be much smaller than $\Z_{p}^{m}$, so not all the points in $\Z_{p}^{m}$
are used. To circumvent this issue, we only store the query points
that are used in the CRT reconstruction. Let $B^{p}=\{\tilde{a}_{p}:a\in\Z_{n}\}$.
For each $p\in P_{1}$, the encoding only stores the evaluation of
$\tilde{g}_{p}$ at the points $B^{p}$ instead of the entire domain
$\Z_{p}^{m}$. The disadvantage of computing the evaluation at the
points in $B^{p}$ is that the encoding stage takes time proportional
to $n$. We thus give up on encoding efficiency (which was one of the main goals
of Kedlaya and Umans) in order to guarantee error-correction.

The second complication is that the sizes of the evaluation tables
may no longer be within a constant factor of each other. (This is
true even if the evaluation points come from all of $\Z_{p}^{m}$.)
If one of the tables has length significantly longer than the others,
then a constant fraction of noise may completely corrupt the entries
of all the other small tables, rendering decoding via CRT impossible.
This potential problem is easy to fix; we apply a repetition code
to each evaluation table so that all the tables have equal length.

\paragraph{\textbf{Reducing the length:}}
Now we explain how to reduce the length of the data structure to nearly
$s\log n$, along the lines of Kedlaya and Umans~\cite{kedl-umans}. To reduce the length, we need to reduce the magnitude
of the primes used by the CRT reconstruction. We can effectively achieve
that by applying the CRT twice. Instead of storing the evaluation
table of $\tilde{g}_{p}$, we apply CRT again and store evaluation
tables of the reduced polynomials of $\tilde{g}_{p}$ instead. Whenever
an entry of $\tilde{g}_{p}$ is needed, we can apply the CRT reconstruction to the
reduced polynomials of $\tilde{g}_{p}$. 

Note that for $p_{1}\in P_{1}$, the maximum value of $\tilde{g}_{p_{1}}$ (over the integers rather than mod $n$)
is at most $T_{2}=d^{m}p_{1}^{dm+1}$. Now apply Theorem~\ref{thm:CRT code}
with $T_{2}$ the size of the message space to obtain a collection
of primes $P_{2}$. Recall that each $p_{1}\in P_{1}$ is at most
$O(dm\log n)$. So each $p_{2}\in P_{2}$ is at most $O((dm)^{1+o(1)}\log\log n)$,
which also bounds the cardinality of $P_{2}$ from above. 

For each query, the number of bit-probes made is at most $|P_{1}||P_{2}|\max_{p_{2}\in P_{2}}\log p_{2}$,
which is at most $(dm)^{2+o(1)}\log^{1+o(1)}n$. Recall that by our
choice $d=\log^{C}s$ and $m=\frac{\log s}{C\log\log s}$, we have $dm=\frac{\log^{C+1}s}{C\log\log s}$. Thus,
the bit-probe complexity is $\polylog s\cdot\log^{1+o(1)}n$.

Next we bound the length of the encoding.
Recall that by the remark following Theorem~\ref{thm:poly eval},
we may assume without loss of generality that $s=\Omega(\log^{\zeta}n)$
for some $0<\zeta<1$. This implies $\log\log n=O(\log s)$.
Then for each $p_{2}\in P_{2}$,
$$
p_{2}^{m} \leq \left(O\left((dm)^{1+o(1)}\log\log n\right)\right)^{m}
 \leq (dm)^{(1+o(1))m}\cdot s^{\frac{1}{C}+o(1)} \leq s^{1+\frac{2}{C}+o(1)}.
$$
Now, by Lemma~\ref{lem:non-binary RLDC}, the length of the encoding
is nearly linear in $|P_{1}||P_{2}|\max_{p_{2}\in P_{2}}p_{2}^{m}\log p_{2}$,
which is at most $\polylog s\cdot\log^{1+o(1)}n\cdot\max_{p_{2}\in P_{2}}p_{2}^{m}$.
Putting everything together, the length of the encoding is nearly
linear in $s\log n$. We now proceed with a formal proof.

\begin{proof}[Proof of Theorem~\ref{thm:poly eval}]We only construct
an error-correcting data structure with error probability $\eps=\frac{1}{4}$.
By a standard amplification technique (i.e., $O(\log(1/\eps))$ repetitions)
we can reduce the error probability to any other positive constant.
We now give a formal description of the encoding and decoding algorithms.

\paragraph{\textbf{Encoding:}}
Apply Theorem~\ref{thm:CRT code} with $T=d^{m}n^{dm+1}$ to obtain a collection of primes $P_1$.
Apply Theorem~\ref{thm:CRT code} with $T=d^{m}(\max_{p\in P_{1}}p)^{dm+1}$ to obtain a collection of primes $P_2$. 
Set $p_{max}=\max_{p_{2}\in P_{2}}p_{2}$. 

Now, for each $p_1\in P_1$, $p_2 \in P_2$, define a collection of evaluation points $B^{p_{1},p_{2}}=\{\tilde{a}_{p_{1},p_{2}}:a\in\Z_{n}\}$. 
Fix a univariate polynomial $g\in\Z_{n}[x]$ of degree at most $s$. 
For every $p_{1}\in P_{1}$, $p_{2}\in P_{2}$,
view each evaluation of the reduced multivariate polynomial $\tilde{g}_{p_{1},p_{2}}$
as a bit-string of length exactly $\left\lceil \log p_{max}\right\rceil $.
Let $L=\max_{p_{1}\in P_1, p_{2}\in P_2}|B^{p_{1},p_{2}}|$ and for each $p_1\in P_1$, $p_2 \in P_2$, set $r^{p_{1},p_{2}}=\left\lceil \frac{L}{|B^{p_{1},p_{2}}|}\right\rceil.$
Define $f^{p_{1},p_{2}}$ to be the concatenation of $r^{p_{1},p_{2}}$ copies of 
the string $\langle\tilde{g}(q)\rangle_{q\in B^{p_{1},p_{2}}}$.
Define the string $f=\langle f^{p_{1},p_{2}}\rangle_{p_{1}\in P_1,p_{2}\in P_2}.$ 

We want to apply Lemma~\ref{lem:non-binary RLDC} to protect the
string $f$, which we can since $f$ may be viewed as a data structure
problem, as follows. 
The set of data-items is the set of polynomials $g$ as above.
The set of queries $Q$ is $\bigcup\limits_{p_{1}\in P_{1},p_2\in P_2}B^{p_{1},p_{2}}\times [r^{p_{1},p_{2}}]$.
The answer to query $(q^{p_{1},p_{2}},i^{p_{1},p_{2}})$
is the $i^{p_{1},p_{2}}$-th copy of $\tilde{g}_{p_{1},p_{2}}(q^{p_{1},p_{2}})$.

Fix $\lambda \in (0,1)$. By Lemma~\ref{lem:non-binary RLDC},
for every $\eta>0,$ there exists $\tau_{0}\in(0,1)$ such that for every $\delta\leq\tau_{0},$
the data structure problem corresponding to $f$ has a $(O(\log p_{max}),\delta,2^{-10},\lambda^3 2^{-36})$-data
structure. Let $\E_{0},\D_{0}$ be its encoder and decoder, respectively. 
Finally, the encoding of the polynomial $g$ is simply \[
\E(g)=\E_{0}(f).\]

Note that the length of $\E(g)$ is at most $(|P_{1}||P_{2}|\max_{p_{2}\in P_{2}}p_{2}^{m}\log p_{2})^{1+\eta}$,
which as we computed earlier is bounded above by $O((s \log n)^{1+\zeta})$ for some arbitrarily small constant $\zeta$.

\paragraph{\textbf{Decoding:}}
We may assume, without loss of generality, that the CRT decoder $\D_{crt}$
from Theorem~\ref{lem:non-binary RLDC} outputs $\perp$ when more
than a $\frac{1}{16}$-fraction of its inputs are erasures (i.e., $\perp$ symbols).

The decoder $\D$, with input $a\in\Z_{n}$ and oracle access to $w$,
does the following: 
\begin{enumerate}
\item Compute $\tilde{a}=(a,a^{d},\ldots,a^{d^{m-1}})\in\Z_{n}^{m}$ , and
for every $p_{1}\in P_{1}$, $p_{2}\in P_{2}$, compute the reduced
evaluation points $\tilde{a}_{p_{1},p_{2}}$. 
\item For every $p_{1}\in P_{1}$, $p_{2}\in P_{2}$, pick $j\in[r^{p_{1},p_{2}}]$
uniformly at random and run the decoder $\D_{0}$ with oracle access
to $w$ to obtain the answers $v_{p_{1},p_{2}}^{(a)}=\D_{0}(\tilde{a}_{p_{1},p_{2}},j)$. 
\item For every $p_{1}\in P_{1}$ obtain 
$\displaystyle v_{p_{1}}^{(a)}=\D_{crt}\left(\left(v_{p_{1,}p_{2}}^{(a)}\right)_{p_{2}\in P_{2}}\right).$
\item Output $v^{(a)}=\D_{crt}\left(\left(v_{p_{1}}^{(a)}\right)_{p_{1}\in P_{1}}\right)$. 
\end{enumerate}

\paragraph{\textbf{Analysis:}}
Fix a polynomial $g$ with degree at most $s$. Fix a bit-string $w$
at relative Hamming distance at most $\delta$ from $\E(g)$, where
$\delta$ is at most $\tau_0$.
We proceed to verify that the above encoding and decoding satisfy
the conditions of Definition~\ref{def:data structure}.

Conditions~1 and~4 are easily verified. For Condition~1, observe
that for each $p_{1}\in P_{1}$, $p_{2}\in P_{2}$, $\D_{0}$ makes
at most $O(\log p_{max})$ bit-probes. So $\D$ makes at most $O(|P_{1}||P_{2}|\log p_{max})$
bit-probes, which as we calculated earlier is at most $\polylog s\cdot\log^{1+o(1)}n$. 

For Condition~4, note that since $\D_{0}$ decodes correctly when
no noise is present, $v_{p_{1},p_{2}}^{(a)}$ is equal to $\tilde{g}_{p_{1},p_{2}}(\tilde{a}_{p_{1},p_{2}})$.
By our choice of $P_1$ and $P_2$, after two applications of the Chinese Remainder Theorem, it is easy to see that
$\D$ outputs $v=\tilde{g}(\tilde{a})$, which equals $g(a)$.

Now we verify Condition~2. Fix $a\in\Z_{n}.$ We want to show that
with oracle access to $w$, with probability at least $\frac{3}{4}$, the decoder $\D$ on input $a$ outputs either $g(a)$
or $\perp$.
For $\pi\in P_{1}\cup(P_{1}\times P_{2}),$
we say that a point $v_{\pi}^{(a)}$ is \emph{incorrect} if $v_{\pi}^{(a)}\notin\{\tilde{g}_{\pi}(\tilde{a}_{\pi}),\perp\}$. 

By Lemma~\ref{lem:non-binary RLDC}, for each $p_{1} \in P_1$ and $p_{2}\in P_2$, $v_{p_{1},p_{2}}^{(a)}$ is incorrect with probability at most $2^{-10}$. Now fix $p_{1}\in P_{1}$. On expectation (over the decoder's randomness), at
most a $2^{-10}$-fraction of the points in the set $\{v_{p_{1},p_{2}}^{(a)}:p_{2}\in P_{2}\}$
are incorrect. By Markov's inequality, with probability at least $1-2^{-6}$,
the fraction of points in the set $\{v_{p_{1},p_{2}}^{(a)}:p_{2}\in P_{2}\}$
that are incorrect is at most $\frac{1}{16}$. If the fraction of
blank symbols in the set $\{v_{p_{1},p_{2}}^{(a)}\}_{p_{2}\in P_{2}}$
is at least $\frac{1}{16}$, then $\D_{crt}$ outputs $\perp$, which
is acceptable. Otherwise, the fraction of errors and erasures (i.e., $\perp$ symbols) in
the set $\{v_{p_{1},p_{2}}^{(a)}:p_{2}\in P_{2}\}$ is at most $\frac{1}{8}$.
By Theorem~\ref{thm:CRT code}, the decoder $\D_{crt}$ will output an incorrect $v_{p_{1}}^{(a)}$
with probability at most $2^{-6}$. 
Thus, on expectation, at most a $2^{-6}$-fraction of the points in $\{v_{p_1}^{(a)}:p_1 \in P_1\}$ are incorrect.
By Markov's inequality again, with probability at least $\frac{3}{4}$, at most a $\frac{1}{16}$-fraction of the points in  $\{v_{p_1}^{(a)}:p_1 \in P_1\}$ are incorrect, which by Theorem~\ref{thm:CRT code} implies that $\D^w_a$ is either $\perp$ or $g(a)$.
This establishes Condition~2.

We now proceed to prove Condition~3. We show the existence of a set
$G\subseteq\Z_{n}$ such that $|G|\geq(1-\lambda) n$ and for each $a\in G$,
we have $\Pr[\D(a)=g(a)]\geq\frac{3}{4}$.
Our proof relies on the following observation: for any $p_{1}\in P_{1}$
and $p_{2}\in P_{2}$, if $a\in\Z_{n}$ is chosen uniformly at random,
then the evaluation point $\tilde{a}_{p_{1},p_{2}}$ is like a uniformly chosen element $q\in B^{p_{1},p_{2}}$. 
This observation implies that if a few entries in the evaluation tables of the
multivariate reduced polynomials are corrupted, then for most $a\in\Z_n$, the output of the decoder $\D$ on input $a$ remains unaffected. We now formalize this observation. 

\begin{claim}\label{claim:crt-random}
Fix $p_{1}\in P_{1}$, $p_{2}\in P_{2}$, and a point
$q\in B^{p_{1},p_{2}}$. Then \[
\Pr_{a\in\Z_{n}}\left[\tilde{a}_{p_{1},p_{2}}\equiv q\right]\leq\frac{4}{p_{2}}.\]
\end{claim}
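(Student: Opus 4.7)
The plan is to reduce the $m$-coordinate event to a single-coordinate event. Under the natural interpretation of the notation (the one that makes $\tilde{g}_{p_{1},p_{2}}(\tilde{a}_{p_{1},p_{2}})\equiv\tilde{g}_{p_{1}}(\tilde{a}_{p_{1}})\pmod{p_{2}}$), the first coordinate of $\tilde{a}_{p_{1},p_{2}}$ is $[[a]_{p_{1}}]_{p_{2}}$, so the event $\tilde{a}_{p_{1},p_{2}}\equiv q$ implies $[[a]_{p_{1}}]_{p_{2}}=q_{0}$, where $q_{0}$ is the first coordinate of $q$. Thus it is enough to bound $\Pr_{a\in\Z_{n}}[[[a]_{p_{1}}]_{p_{2}}=q_{0}]$, and the rest of the coordinates may be ignored.

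Next I would carry out a direct counting. For any fixed $r\in\{0,\dots,p_{1}-1\}$, the fiber $\{a\in\Z_{n}:[a]_{p_{1}}=r\}$ is an arithmetic progression of common difference $p_{1}$ inside $\Z_{n}$, hence has at most $\lceil n/p_{1}\rceil$ elements. Among $r\in\{0,\dots,p_{1}-1\}$, the condition $r\equiv q_{0}\pmod{p_{2}}$ cuts out an arithmetic progression of common difference $p_{2}$ with at most $\lceil p_{1}/p_{2}\rceil$ elements. Multiplying these counts and dividing by $n$ yields
\[
\Pr_{a\in\Z_{n}}\bigl[\tilde{a}_{p_{1},p_{2}}\equiv q\bigr]\;\le\;\frac{\lceil p_{1}/p_{2}\rceil\,\lceil n/p_{1}\rceil}{n}\;\le\;\frac{1}{p_{2}}+\frac{1}{p_{1}}+\frac{p_{1}}{np_{2}}+\frac{1}{n}.
\]

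To finish, I would invoke the prime-size estimates already established in the encoding: $p_{1}=O(dm\log n)$ and $p_{2}=O((dm)^{1+o(1)}\log\log n)$, so both are polylogarithmic in $n$; in particular $p_{2}\le p_{1}\le n$, the middle inequality following because the message spaces satisfy $T_{2}\ll T_{1}$. Under these, each of the three ``slack'' terms $1/p_{1}$, $p_{1}/(n p_{2})$, and $1/n$ is individually at most $1/p_{2}$, giving the target bound $4/p_{2}$.

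I do not foresee a real obstacle. The claim is essentially just the statement that the two-stage reduction $a\mapsto [a]_{p_{1}}\mapsto [[a]_{p_{1}}]_{p_{2}}$ almost balances the uniform distribution on $\Z_{n}$ onto $\Z_{p_{2}}$. The only minor care needed is notational (unwinding what $\tilde{a}_{p_{1},p_{2}}$ means in terms of nested modular reductions) and checking that $p_{2}\le p_{1}\le n$ in the regime where Theorem~\ref{thm:poly eval} is nontrivial.
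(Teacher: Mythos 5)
Your proof is correct and takes essentially the same approach as the paper's: reduce to the first coordinate of $\tilde{a}_{p_1,p_2}$, count preimages along the chain $a \mapsto [a]_{p_1} \mapsto [[a]_{p_1}]_{p_2}$ via the bound $\lceil n/p_1\rceil\cdot\lceil p_1/p_2\rceil$ (the paper writes $(\lfloor n/p_1\rfloor+1)(\lfloor p_1/p_2\rfloor+1)$, which is the same up to the harmless $\lceil\cdot\rceil$ vs.\ $\lfloor\cdot\rfloor+1$ distinction), then absorb the three slack terms using $p_2\le p_1\le n$. The paper phrases the middle step in terms of $B^{p_1}$ rather than explicitly projecting to the first coordinate, but the underlying count is identical.
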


\begin{proof} For any pair of positive integers $m\leq n$, the number
of integers in $[n]$ congruent to a fixed integer mod $m$ is at
most $\left\lfloor \frac{n}{m}\right\rfloor +1$ and at least $\left\lfloor \frac{n}{m}\right\rfloor -1$. 
Note that if $a,b\in\Z_{n}$ with $a\equiv b\mod m$, then for any integer $i$, $a^{i}\equiv b^{i}\mod m$.
Thus, $\tilde{a}_{m}\equiv\tilde{b}_{m}.$ 

It is not hard to see that for a fixed $q_{1}\in B^{p_{1}}$,
the number of integers $a\in\Z_{n}$ such that ${\tilde{a}}_{p_{1}}\equiv q_{1}$
is at most $\left\lfloor \frac{n}{p_{1}}\right\rfloor +1$. Furthermore,
for a fixed $q_{2}\in B^{p_{1},p_{2}}$, the number of points in $B^{p_{1}}$
that are congruent to $q_{2}$ mod $p_{2}$ is at most $\left\lfloor \frac{p_{1}}{p_{2}}\right\rfloor +1$.
Thus, for a fixed $q\in B^{p_{1},p_{2}}$, the number of integers
$a\in\Z_{n}$ such that ${\tilde{a}}_{p_{1},p_{2}}\equiv q$ is at
most $\left(\left\lfloor \frac{n}{p_{1}}\right\rfloor +1\right)\left(\left\lfloor \frac{p_{1}}{p_{2}}\right\rfloor +1\right)$,
which is at most $4\frac{n}{p_{2}}$ since $n\geq p_{1}\geq p_{2}$.
\end{proof}

Now, for every $p_{1}\in P_{1}$ and $p_{2}\in P_{2}$, we say that
a query $(q,j)\in B^{p_{1},p_{2}}\times[r^{p_{1},p_{2}}]$ is \emph{bad}
if the probability that $\D_{0}^{w}(q,j)\neq{\tilde{g}}_{(p_{1},p_{2})}(q)$
is greater than $2^{-10}$. By Lemma~\ref{lem:non-binary RLDC},
the fraction of bad queries in $\cup_{p_{1},p_{2}}B^{p_{1},p_{2}}\times[r^{p_{1},p_{2}}]$
is at most $\lambda_0:=\lambda^3 2^{-36}$. We say that a tuple of primes
$(p_{1},p_{2})\in P_{1}\times P_{2}$ is \emph{bad} if more than a $2^{11}\lambda_0\lambda^{-1}$-fraction
of queries in $B^{p_{1},p_{2}}\times[r^{p_{1},p_{2}}]$ are bad (below,
\emph{good} always denotes not bad.) By averaging, the fraction of
bad tuples $(p_{1},p_{2})$ is at most $2^{-11}\lambda$.

For a fixed good tuple $(p_{1},p_{2})$, we say that an index $i^{p_{1},p_{2}}$
is \emph{bad} if more than a $2^{-11}\lambda$-fraction of queries in the
copy $B^{p_{1},p_{2}}\times\{i^{p_{1},p_{2}}\}$ are bad. Since $(p_{1},p_{2})$
is good, by averaging, at most a $2^{22}\lambda_0\lambda^{-2}$-fraction
of $[r^{p_{1},p_{2}}]$ are bad. Recall that in Step~2 of the decoder
$\D$, the indices $\{j^{p_{1},p_{2}}:p_{1}\in P_{1},p_{2}\in P_{2}\}$
are chosen uniformly at random. So on expectation, the set of indices
$\{j^{p_{1},p_{2}}:(p_{1},p_{2})\mbox{ is good}\}$ has at most a $2^{22}\lambda_0\lambda^{-2}$-fraction
of bad indices. By Markov's inequality, with probability at least
$\frac{7}{8}$, the fraction of bad indices in the set $\{j^{p_{1},p_{2}}:(p_{1},p_{2})\mbox{ is good}\}$
is at most $2^{25}\lambda_0\lambda^{-2}$. We condition on
this event occurring and fix the indices $j^{p_{1},p_{2}}$ for each
$p_{1}\in P_{1}$, $p_{2}\in P_{2}$.

Fix a good tuple $(p_{1},p_{2})$ and a good index $j^{p_{1},p_{2}}$.
By Claim~\ref{claim:crt-random}, for a uniformly random $a\in\Z_{n}$,
the query $(\tilde{a}_{p_{1},p_{2}},j^{p_{1},p_{2}})$ is bad with
probability at most $2^{-9} \lambda$. By linearity of expectation, for
a random $a\in\Z_{n}$, the expected fraction of bad queries in the
set $S^{a}=\{(\tilde{a}_{p_{1},p_{2}},j^{p_{1},p_{2}}):p_1\in P_1,p_2\in P_2\}$
is at most $2^{-11}\lambda+2^{25}\lambda_0\lambda^{-2}+2^{-9}\lambda$, which
is at most $2^{-8}\lambda$ by definition of $\lambda_0$. Thus, by Markov's
inequality, for a random $a\in\Z_{n}$, with probability at least
$1-\lambda$, the fraction of bad queries in the set $S^{a}$ is at most
$2^{-8}$. By linearity of expectation, there exists some subset $G\subseteq\Z_{n}$
with $|G|\geq(1-\lambda)n$ such that for every $a\in G$, the fraction of
bad queries in $S^{a}$ is at most $2^{-8}$. 

Now fix $a\in G$. By definition, the fraction of bad queries in $S^{a}$ is at
most $2^{-8}$, and furthermore, each of the good queries in $S^{a}$ is incorrect with probability
at most $2^{-10}$. So on expectation, the fraction of errors and
erasures in $S^{a}$ is at most $2^{-8}+2^{-10}$. By Markov's inequality, with
probability at least $\frac{7}{8}$, the fraction of errors and
erasures in the set $\{v_{p_{1},p_{2}}^{(a)}:p_{1}\in P_{1},p_{2}\in P_{2}\}$
is at most $2^{-5}+2^{-7}$, which is at most $\frac{1}{25}$. 
We condition on this event occurring. By averaging,
for more than a $\frac{4}{5}$-fraction of the primes $p_{1}\in P_{1}$,
the set $\{v_{p_{1},p_{2}}^{(a)}:p_{2}\in P_{2}\}$ has at most 
$\frac{1}{5}$-fraction of errors and erasures, which can be corrected
by the CRT decoder $\D_{crt}$. Thus, after Step~3 of the decoder
$\D$, the set $\{v_{p_{1}}^{(a)}\}$ has at most a $\frac{1}{5}$-fraction
of errors and erasures, which again will be corrected by the CRT decoder
$\D_{crt}$. Hence, by the union bound, the two events that we conditioned on
earlier occur simultaneously with probability at least $\frac{3}{4}$, and $\D(a)$ will
output $g(a)$. 
\end{proof}

\section{Conclusion and future work\label{sec:Conclusion}}
We presented a relaxation of the notion of error-correcting data structures
recently proposed in~\cite{wolf:ecdata}. While the earlier definition
does not allow data structures that are both error-correcting and
efficient in time and space (unless an unexpected breakthrough happens for
constant-probe LDCs), our new definition allows us to construct
efficient, error-correcting data structures for both the \noun{membership}
and the \noun{polynomial evaluation} problems. This opens up many
directions: what other data structures can be made error-correcting?

The problem of computing \emph{rank} within a sparse ordered set is a good target.
Suppose we are given a universe $[n]$, some nonnegative
integer $s\leq n$, and a subset $S\subseteq[n]$ of size at most
$s$. The rank problem is to store $S$ compactly so that on input
$i\in[n]$, the value $|\{j\in S:j\leq i\}|$ can be computed efficiently.
For easy information-theoretic reasons, any data structure for this problem 
needs length at least $\Omega(s\log n)$ and makes
$\Omega(\log s)$ bit-probes for each query. If $s=O(\log n)$, one
can trivially obtain an error-correcting data structure of optimal length $O(s\log n)$
with $O(\log^{2}n)$ bit-probes, which is only quadratically worse than optimal: 
write down $S$ as a string of $s\log n$ bits, encode it with a good error-correcting code, 
and read the entire encoding when an index is queried.
However, it may be possible to do something smarter and more involved.
We leave the construction of near-optimal error-correcting data structures for rank with small $s$
(as well as for related problems such as \emph{predecessor}) as challenging open problems.

\subsection*{Acknowledgments}
We thank Madhu Sudan for helpful comments and suggestions on the presentation of this paper.

\bibliographystyle{plain}
\bibliography{qc}

\begin{thebibliography}{10}

\bibitem{apostol}
T.~M. Apostol.
\newblock {\em Introduction to Analytic Number Theory}.
\newblock Springer-Verlag, New York, 1979.

\bibitem{aumann&bender:ftdata}
Y.~Aumann and M.~Bender.
\newblock Fault-tolerant data structures.
\newblock In {\em Proceedings of 37th IEEE FOCS}, pages 580--589, 1996.

\bibitem{bikr:improvedpir}
A.~Beimel, Y.~Ishai, E.~Kushilevitz, and J.~Raymond.
\newblock Breaking the {$O(n^{1/(2k-1)})$} barrier for information-theoretic
  {P}rivate {I}nformation {R}etrieval.
\newblock In {\em Proceedings of 43rd IEEE FOCS}, pages 261--270, 2002.

\bibitem{bghsv04}
E.~Ben-Sasson, O.~Goldreich, P.~Harsha, M.~Sudan, and S.~Vadhan.
\newblock Robust {PCP}s of proximity, shorter {PCP}s and applications to
  coding.
\newblock {\em SIAM Journal on Computing}, 36(4):889--974, 2006.
\newblock Earlier version in STOC'04.

\bibitem{bffgijmm:resilientdict}
G.~Brodal, R.~Fagerberg, I.~Finocchi, F.~Grandoni, G.~Italiano,
  A.~J{\o}rgenson, G.~Moruz, and T.~M{\o}lhave.
\newblock Optimal resilient dynamic dictionaries.
\newblock In {\em Proceedings of 15th European Symposium on Algorithms (ESA)},
  pages 347--358, 2007.

\bibitem{bmrv:bitvectorsj}
H.~Buhrman, P.~B. Miltersen, J.~Radhakrishnan, and S.~Venkatesh.
\newblock Are bitvectors optimal?
\newblock {\em SIAM Journal on Computing}, 31(6):1723--1744, 2002.
\newblock Earlier version in STOC'00.

\bibitem{efremenko:ldc}
K.~Efremenko.
\newblock 3-query locally decodable codes of subexponential length.
\newblock In {\em Proceedings of 41st ACM STOC}, 2009.

\bibitem{fgi:resilientsorting}
I.~Finocchi, F.~Grandoni, and G.~Italiano.
\newblock Optimal resilient sorting and searching in the presence of memory
  faults.
\newblock In {\em Proceedings of 33rd ICALP}, volume 4051 of {\em Lecture Notes
  in Computer Science}, pages 286--298, 2006.

\bibitem{fgi:resilientsearch}
I.~Finocchi, F.~Grandoni, and G.~Italiano.
\newblock Resilient search trees.
\newblock In {\em Proceedings of 18th ACM-SIAM SODA}, pages 547--553, 2007.

\bibitem{finocchi&italiano:faults}
I.~Finocchi and G.~Italiano.
\newblock Sorting and searching in the presence of memory faults (without
  redundancy).
\newblock In {\em Proceedings of 36th ACM STOC}, pages 101--110, 2004.

\bibitem{fks:sparsetable}
M.~Fredman, M.~Koml{\'o}s, and E.~Szemer{\'e}di.
\newblock Storing a sparse table with {$O(1)$} worst case access time.
\newblock {\em Journal of the ACM}, 31(3):538--544, 1984.

\bibitem{grs:crt-errors}
O.~Goldreich, D.~Ron, and M.~Sudan.
\newblock Chinese remaindering with errors.
\newblock {\em IEEE Transactions on Information Theory}, 46(4):1330--1338,
  2000.

\bibitem{jmm:resilientpriority}
A.~G. J{\o}rgenson, G.~Moruz, and T.~M{\o}lhave.
\newblock Resilient priority queues.
\newblock In {\em Proceedings of 10th International Workshop on Algorithms and
  Data Structures (WADS)}, volume 4619 of {\em Lecture Notes in Computer
  Science}, 2007.

\bibitem{katz&trevisan:ldc}
J.~Katz and L.~Trevisan.
\newblock On the efficiency of local decoding procedures for error-correcting
  codes.
\newblock In {\em Proceedings of 32nd ACM STOC}, pages 80--86, 2000.

\bibitem{kedl-umans}
K.~S. Kedlaya and C.~Umans.
\newblock Fast modular composition in any characteristic.
\newblock In {\em Proceedings of 49th IEEE FOCS}, pages 146--155, 2008.

\bibitem{kerenidis&wolf:qldcj}
I.~Kerenidis and R.~{de} Wolf.
\newblock Exponential lower bound for 2-query locally decodable codes via a
  quantum argument.
\newblock {\em Journal of Computer and System Sciences}, 69(3):395--420, 2004.
\newblock Earlier version in STOC'03. quant-ph/0208062.

\bibitem{miltersen95polyeval}
P.~B. Miltersen.
\newblock On the cell probe complexity of polynomial evaluation.
\newblock {\em Theor. Comput. Sci.}, 143(1):167--174, 1995.

\bibitem{miltersen:cellprobesurvey}
P.~B. Miltersen.
\newblock Cell probe complexity - a survey. {I}nvited paper at \emph{Advances
  in Data Structures} workshop.
\newblock Available at Miltersen's homepage, 1999.

\bibitem{huffman-coding}
V.~S. Pless, W.~C. Huffman, and R.~A. Brualdi, editors.
\newblock {\em Handbook of Coding Theory, Vol.1}.
\newblock Elsevier Science, New York, NY, USA, 1998.

\bibitem{wolf:ecdata}
R.~{de} Wolf.
\newblock Error-correcting data structures.
\newblock In {\em Proceedings of 26th Annual Symposium on Theoretical Aspects
  of Computer Science (STACS'2009)}, pages 313--324, 2009.
\newblock cs.DS/0802.1471.

\bibitem{woodruff:ldclower}
D.~Woodruff.
\newblock New lower bounds for general locally decodable codes.
\newblock Technical report, ECCC Report TR07--006, 2006.

\bibitem{yekhanin:3ldcj}
S.~Yekhanin.
\newblock Towards 3-query locally decodable codes of subexponential length.
\newblock {\em Journal of the ACM}, 55(1), 2008.
\newblock Earlier version in STOC'07.

\end{thebibliography}

\appendix
\section{Non-binary answer set\label{sec:Non-binary-alphabets}}
We prove Lemma~\ref{lem:non-binary RLDC}, a version of Theorem~\ref{thm:bghsv}
when the answer set $A$ is non-binary. We first encode the $\ell|Q|$-bit
string $\left\langle f(x,q)\right\rangle_{q\in Q}$ by an RLDC, and
use the decoder of the RLDC to recover each of the $\ell$ bits of
$f(x,q)$. Now it is possible that for each $q\in Q$, the decoder
outputs some blank symbols $\perp$ for some of the bits of $f(x,q)$,
and no query could be answered correctly. To circumvent this, we first
encode each $\ell$-bit string $f(x,q)$ with a good error-correcting
code, then encode the entire string by the RLDC. Now if the decoder
does not output too many errors or blank symbols among the bits of
the error-correcting code for $f(x,q)$, we can recover it. We need
a family of error-correcting codes with the following property, see
e.g. page $668$ in~\cite{huffman-coding}.

\begin{fact}\label{fact:good code} For every $\delta\in(0,1/2)$
there exists $R\in(0,1)$ such that for all $n$, there exists a binary
linear code of block length $n$, information length $Rn$, Hamming
distance $\delta n$, such that the code can correct from $e$ errors
and $s$ erasures, as long as $2e+s<\delta n$. \end{fact}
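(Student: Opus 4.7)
The plan is to assemble Fact~\ref{fact:good code} from two classical pieces of coding theory: the Gilbert--Varshamov bound, which produces binary linear codes with good rate and minimum distance, and the standard translation from minimum distance into simultaneous error-and-erasure correcting capability.

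First I would fix $\delta\in(0,1/2)$ and set $R=1-H(\delta)-\gamma$ for a small $\gamma>0$, where $H(\cdot)$ denotes the binary entropy function; this will be the rate. The random-generator-matrix form of Gilbert--Varshamov says that a uniformly chosen $G\in\mathbb{F}_2^{k\times n}$ with $k=\lceil Rn\rceil$ produces a linear code of minimum distance at least $\delta n$ with positive probability, once $n$ is at least some constant $n_0(\delta,\gamma)$. The standard calculation is a union bound: a fixed nonzero message $u\in\mathbb{F}_2^k$ has $uG$ uniform over $\mathbb{F}_2^n\setminus\{0\}$, so the probability that $uG$ has weight less than $\delta n$ is at most $2^{H(\delta)n}/2^n$, and after union-bounding over $u$ the failure probability is at most $2^{k+H(\delta)n-n}<1$ by our choice of $R$. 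For the finitely many small $n<n_0$ the ``for all $n$'' clause can be preserved by shrinking $R$ (e.g.~using the trivial dimension-$0$ code).

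Second, I would deduce the combined error/erasure correction property from the minimum distance $d\geq\delta n$ alone. Let $c\in C$ be transmitted and $y$ the received word, with erasure set $E\subseteq[n]$ of size $s$ and error set $F\subseteq[n]\setminus E$ of size $e$. The decoder outputs the (necessarily unique) codeword that agrees with $y$ on all but at most $e$ of the non-erased positions. To verify uniqueness, consider any $c''\neq c$: since $c$ and $c''$ differ in at least $d$ coordinates and at most $s$ of those differences can lie inside $E$, they disagree on at least $d-s$ positions of $[n]\setminus E$. Because $y$ disagrees with $c$ on at most $e$ non-erased positions, $y$ must disagree with $c''$ on at least $(d-s)-e>e$ such positions, so $c''$ fails the decoder's test.

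No real obstacle arises: both ingredients are textbook. The only points that need mild care are insisting on a random \emph{generator matrix} (rather than a random unstructured code) so that linearity is preserved, and tracking the error and erasure counts so that the hypothesis $2e+s<\delta n$ is used exactly once in the uniqueness step.
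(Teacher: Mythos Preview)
Your proof is correct and self-contained; the paper itself does not prove this fact at all but simply cites it as textbook material (``see e.g.\ page~668 in~\cite{huffman-coding}''), so there is no argument in the paper to compare against. Your two-step assembly---Gilbert--Varshamov for existence of a good linear code, then the standard minimum-distance argument for combined error/erasure correction---is exactly the classical route and is what the citation is pointing to.

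One minor slip worth fixing: for a uniformly random generator matrix $G$ and nonzero $u\in\mathbb{F}_2^k$, the vector $uG$ is uniform over all of $\mathbb{F}_2^n$, not over $\mathbb{F}_2^n\setminus\{0\}$. (In particular $uG=0$ is possible, with probability $2^{-n}$.) This does not affect your union bound, since the event ``$uG$ has weight less than $\delta n$'' already includes $uG=0$, and the volume estimate $2^{H(\delta)n}/2^n$ remains valid.
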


\begin{proof}[Proof of Lemma~\ref{lem:non-binary RLDC}]
We only construct an error-correcting data structure with error probability $\eps=\frac{1}{4}$.
By a standard amplification technique (i.e., $O(\log(1/\eps))$ repetitions)
we can reduce the error probability to any other positive constant.
Let $\E_{ecc}:\cube^{\ell}\rightarrow\cube^{\ell'}$ be an asymptotically
good binary error-correcting code (from Fact~\ref{fact:good code}),
with $\ell'=O(\ell)$ and relative distance $\frac{3}{8}$, and decoder
$\D_{ecc}$. By Theorem~\ref{thm:bghsv}, there exist $c_{0},\tau_{0}>0$
such that for every $\delta\leq\tau_{0}$, there is a $(O(1),\delta,\frac{1}{32},c_{0}\delta)$-relaxed locally decodable code (RLDC).
Let $\E_{0}$ and $\D_{0}$ denote its encoder and decoder, respectively.

\paragraph{\textbf{Encoding.}}

We construct a data structure for $f$ as follows. Define the encoder
$\E:D\rightarrow\cube^{N}$, where $N=O(\left(\ell'\cdot|Q|\right)^{1+\eta})$,
as \[
\E(x)=\E_{0}\left(\left\langle \,\E_{ecc}(f(x,q))\,\right\rangle _{q\in Q}\right).\]

\paragraph{\textbf{Decoding.}}

Without loss of generality, we may impose an ordering on the set $Q$
and identify each $q\in Q$ with an integer in $[Q]$.

The decoder $\D$, with input $q\in Q$ and oracle access to $w\in\cube^{N}$,
does the following: 
\begin{enumerate}
\item For each $j\in[\ell']$, let $r_{j}=\D_{0}^{w}\left((q-1)\ell'+j\right)$
and set $r=r_{1}\ldots r_{\ell'}\in\{0,1,\perp\}^{\ell'}$. 
\item If the number of blank symbols $\perp$ in $r$ is at least $\frac{\ell'}{8}$,
then output $\perp$. Else, output $\D_{ecc}(r)$. 
\end{enumerate}

\paragraph{\textbf{Analysis.}}

Fix $x\in{D}$ and $w\in\cube^{N}$ such that $\Delta(w,\E(x))\leq\delta N$,
and $\delta\leq\tau$, where $\tau$ is the minimum of $\tau_{0}$
and $\lambda2^{-6}c_0^{-1}$. We need to argue that the above encoding
and decoding satisfies the four conditions of Definition~\ref{def:data structure}.
For Condition~$1$, since $\D_{0}$ makes $O(1)$ bit-probes and $\D$
runs this $\ell'$ times, $\D$ makes $O(\ell')=O(\ell)$ bit-probes into
$w$.

We now show $\D$ satisfies Condition~$2$. Fix $q\in Q$. We want
to show $\Pr[\D^{w}(q)\in\{f(x,q),\perp\}]\geq\frac{3}{4}$. By Theorem~\ref{thm:bghsv},
for each $j\in[\ell']$, with probability at most $\frac{1}{32}$,
$r_{j}=f(x,q)_{j}\oplus1$. So on expectation, for at most a $\frac{1}{32}$-fraction
of the indices $j$, $r_{j}=f(x,q)_{j}\oplus1$. By Markov's inequality,
with probability at least $\frac{3}{4}$, the number of indices $j$ such
that $r_{j}=f(x,q)_{j}\oplus1$ is at most $\frac{\ell'}{8}$. If
the number of $\perp$ symbols in $r$ is at least $\frac{\ell'}{8}$
then $\D$ outputs $\perp$, so assume the number of $\perp$ symbols
is less than $\frac{\ell'}{8}$. Those $\perp$'s are viewed as erasures
in the codeword $\E_{ecc}(f(x,q))$. Since $\E_{ecc}$ has relative
distance $\frac{3}{8}$, by Fact~\ref{fact:good code}, $\D_{ecc}$
will correct these errors and erasures and output $f(x,q)$.

For Condition~$3$, we show there exists a large subset $G$ of $q$'s
satisfying $\Pr[\D^{w}(q)=f(x,q)]\geq\frac{3}{4}$. Let $y=\left\langle \,\E_{ecc}(f(x,q))\,\right\rangle _{q\in Q}$,
which is a $\ell'|Q|$-bit string. Call an index $i$ in $y$ \emph{bad}
if $\Pr[\D_{0}^{w}(i)=y_{i}]<\frac{3}{4}.$ By Theorem~\ref{thm:bghsv},
at most a $c_{0}\delta$-fraction of the indices in $y$ are bad.
We say that a query $q\in Q$ is \emph{bad} if more than a $\frac{1}{64}$-fraction
of the bits in $\E_{ecc}(f(x,q))$ are bad. By averaging, the fraction of bad queries in $Q$ is at most 
$64c_{0}\delta$, which is at most $\lambda$ by our choice of $\tau$. 
We define $G$ to be the set of $q\in Q$ that are not bad. Clearly $|G|\geq (1-\lambda) |Q|$.

Fix $q\in G$. On expectation (over the decoder's randomness), the fraction of indices
in $r$ such that $r_{j}\neq f(x,q)_{j}$ is at most $\frac{1}{64}+\frac{1}{32}$. 
Hence by Markov's inequality,
with probability at least $\frac{3}{4}$, the fraction
of indices in $r$ such that $r_{j}\neq f(x,q)_{j}$ is at most $\frac{3}{16}$. 
Thus, by Fact~\ref{fact:good code}, $\D_{ecc}(r$) will recover from these
errors and erasures and output $f(x,q)$.

Finally, Condition~$4$ follows since the pair $(\E_{0},\D_{0})$
satisfies Condition~4, finishing the proof.
\end{proof} 

\section{CRT codes\label{sec:CRT-code}}

In this section we explain how Theorem~\ref{thm:CRT code} follows from known facts.
In~\cite{grs:crt-errors}, Goldreich, Ron, and Sudan
designed a unique decoding algorithm for CRT code. 

\begin{theorem}[from~\cite{grs:crt-errors}]\label{thm:CRT decoding}
Given a CRT Code with basis $p_{1}<\ldots<p_{N}$ and rate $K/N$,
there exists a polynomial-time algorithm that can correct up to $\frac{\log p_{1}}{\log p_{1}+\log p_{N}}(N-K)$
errors. 
\end{theorem}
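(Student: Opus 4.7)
The plan is to reduce CRT decoding to finding a small nonzero vector in a two-dimensional integer lattice, and then solve that via Gauss's lattice-reduction algorithm (equivalently, the continued-fraction expansion of $R/Q$).

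First I would reformulate the problem. Let $Q = \prod_{i=1}^N p_i$ and apply CRT to the received word $(r_1,\ldots,r_N)$ to obtain the unique integer $R \in [0,Q)$ with $R \equiv r_i \pmod{p_i}$ for every $i$. Let $m^\star \in \Z_T$ be the transmitted message (with $T = \prod_{i=1}^K p_i$), and let $E \subseteq [N]$ be the unknown error-locations, $|E|\leq e$. For every $i \notin E$ we have $m^\star \equiv r_i \equiv R \pmod{p_i}$, so setting $P_E = \prod_{i\in E} p_i$ gives $P_E\cdot m^\star \equiv P_E\cdot R \pmod Q$. Hence $(P_E, P_E m^\star)$ is a nonzero vector in the lattice
\[
L \;=\; \{(u,v)\in\Z^2 : v \equiv uR \pmod Q\},
\]
which has determinant $Q$, and this vector satisfies $P_E \leq p_N^{\,e}$ and $P_E m^\star < p_N^{\,e}\, T$.

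Next, I would find a short vector of $L$. Run Gauss's two-dimensional reduction (or, equivalently, the extended Euclidean algorithm on $(R,Q)$) in the weighted $\ell_\infty$ norm $\|(u,v)\|_{U,V} := \max(|u|/U,\,|v|/V)$ with $U = p_N^{\,e}$ and $V = p_N^{\,e}T$. In polynomial time this produces a nonzero $(u^\star,v^\star)\in L$ of weighted norm within a small constant of the minimum, and by construction the true solution $(P_E, P_E m^\star)$ itself has weighted norm at most $1$. Uniqueness then comes from the following cross-product argument: if $(a_1,b_1),(a_2,b_2)\in L$ both have weighted norm $\leq 1$, then $a_1b_2 - a_2b_1$ is a multiple of $Q$ (since $b_i \equiv a_iR\pmod Q$) yet has absolute value at most $2UV = 2p_N^{\,2e}T$. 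Under the hypothesis on $e$ one checks that $2p_N^{\,2e}T < Q$, forcing $a_1b_2=a_2b_1$ and hence proportionality. Consequently the vector returned by reduction is a scalar multiple of $(P_E, P_E m^\star)$; dividing its two coordinates recovers the candidate $m^\star$, which is then verified against the received word.

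The main obstacle is matching the uniqueness threshold $2UV < Q$ to the \emph{exact} bound $e \leq \tfrac{\log p_1}{\log p_1+\log p_N}(N-K)$ claimed in the theorem. A naive substitution of $U = p_N^{\,e}$ and $V = p_N^{\,e}T$ only yields $e < \tfrac{\log(Q/T)}{2\log p_N}$, which is not tight when the primes are highly unbalanced. To recover the sharp bound one must choose the weights $U,V$ more carefully---tailoring them to the actual magnitudes of $P_E$ and $m^\star$ rather than the worst-case envelopes---and exploit that $P_E$ is a product of $e$ \emph{distinct} primes drawn from the basis (so $\log P_E \leq \log p_N + \log p_{N-1} + \cdots + \log p_{N-e+1}$ rather than $e\log p_N$). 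This finer accounting trades the $\log p_1$ contribution coming from the uncorrupted positions against the $\log p_N$ contribution from the potentially corrupted ones, and is the technical heart of the Goldreich--Ron--Sudan analysis.
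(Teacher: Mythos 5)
The paper never proves this statement itself: Theorem~\ref{thm:CRT decoding} is imported as a black box from Goldreich, Ron, and Sudan~\cite{grs:crt-errors}, and the appendix only uses it to derive Theorem~\ref{thm:CRT code}. So the comparison is with the external GRS proof, whose framework you have in fact correctly reproduced: reconstructing $R$ by CRT, observing that $(P_E,\,P_E m^\star)$ lies in the determinant-$Q$ lattice $\{(u,v):v\equiv uR \bmod Q\}$, finding a short vector by Gauss reduction/continued fractions, and using the cross-product argument for uniqueness are all sound, and this is indeed the Welch--Berlekamp-style route to CRT unique decoding. The genuine gap is the one you concede in your last paragraph: with $U=p_N^{\,e}$, $V=p_N^{\,e}T$ the condition $2UV<Q$ only yields $e<\log(Q/T)/(2\log p_N)$, which is strictly weaker than the claimed bound $\frac{\log p_1}{\log p_1+\log p_N}(N-K)$ whenever the primes are unbalanced. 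Deferring exactly the step that produces the stated quantitative bound (``the technical heart of the Goldreich--Ron--Sudan analysis'') means the theorem as stated is not proved.

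What you may not realize is that the refinement you gesture at already closes the gap, and no adaptive choice of weights is needed. Take $U=p_{N-e+1}\cdots p_N\geq P_E$ and $V=UT$. Then $2U^2T<Q$ is equivalent to $2\,p_{N-e+1}\cdots p_N < p_{K+1}\cdots p_{N-e}$, because the $e$ largest primes occur once in $Q/T=p_{K+1}\cdots p_N$ and twice in $U^2$, so they cancel. Bounding the left side by $2p_N^{\,e}$ and the right side by $p_1^{\,N-K-e}$ gives the sufficient condition $e(\log p_1+\log p_N)\leq (N-K)\log p_1-1$, i.e.\ exactly the claimed bound up to an additive term of $1/(\log p_1+\log p_N)$ stemming from the factor $2$ in the cross-product estimate; that boundary slack (and the edge cases $u^\star=0$, $m^\star=0$) is what a complete write-up must still handle. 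Two smaller issues: you should insist on an exact shortest vector in the weighted norm (obtainable in dimension two by rescaling coordinates), since ``within a small constant of the minimum'' pushes the found vector outside the box $[-U,U]\times[-V,V]$ and degrades the bound again; and, as a side remark, for the only use the paper makes of this theorem (Theorem~\ref{thm:CRT code}, where all primes lie in $(\log T,500\log T)$ so $\log p_1/\log p_N=1-O(1/\log\log T)$), even your weaker bound $e<\log(Q/T)/(2\log p_N)$ would suffice to give the $\frac14-O(\frac{1}{\log\log T})$ fraction of correctable errors---but it does not establish the theorem as stated.
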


By choosing the primes appropriately, we can establish Theorem~\ref{thm:CRT code}.
In particular, the following well-known estimate, essentially a consequence
of the Prime Number Theorem, is useful. See for instance Theorem~4.7
in~\cite{apostol} for more details.

\begin{fact}\label{fact:Prime Number Theorem} For an integer
$\ell>0$, the $\ell$th prime (denoted $q_{\ell}$) satisfies $\frac{1}{6}\ell\log\ell<q_{\ell}<13\ell\log\ell$.
\end{fact}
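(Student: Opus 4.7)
The plan is to deduce both inequalities for the $\ell$th prime $q_\ell$ from Chebyshev-type explicit bounds on the prime counting function $\pi(x)$, which is exactly what Apostol's Theorem~4.7 provides. Concretely, I will use estimates of the form
\[
C_2 \cdot \frac{x}{\ln x} \;\le\; \pi(x) \;\le\; C_1 \cdot \frac{x}{\ln x} \qquad (x \ge 2)
\]
for explicit absolute constants $C_1, C_2 > 0$, with $C_1$ slightly smaller than the $6 \ln 2$ factor needed after the base-change from natural logarithm to base-$2$ logarithm, and similarly for $C_2$ on the other side. The fact that $\pi(q_\ell) = \ell$ by definition of the $\ell$th prime then lets me invert these bounds.

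For the lower bound on $q_\ell$, I will plug $x = q_\ell$ into the upper Chebyshev estimate to get $\ell \le C_1 q_\ell / \ln q_\ell$, i.e.\ $q_\ell \ge (\ell / C_1) \ln q_\ell$. Since $q_\ell \ge \ell$ (the $\ell$th prime is at least $\ell$), we have $\ln q_\ell \ge \ln \ell$, and after converting to base-$2$ logarithm the constant absorbs into $\tfrac{1}{6}$, giving $q_\ell > \tfrac{1}{6}\ell \log \ell$.

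For the upper bound, plugging $x = q_\ell$ into the lower Chebyshev estimate gives $\ell \ge C_2 q_\ell / \ln q_\ell$, hence $q_\ell \le (\ell / C_2)\ln q_\ell$. This is self-referential, so I will run a short bootstrap: the bound first yields a crude estimate $q_\ell = O(\ell \ln^2 \ell)$, which in particular forces $\ln q_\ell \le \ln \ell + 2 \ln \ln \ell + O(1) \le 2 \ln \ell$ for $\ell$ above some absolute threshold. Substituting back produces $q_\ell \le (2/C_2)\,\ell \ln \ell$, and after converting to base-$2$ logarithm the constant is comfortably below $13$.

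The only mild obstacle is the bootstrap step and the handful of small values of $\ell$ for which the asymptotic argument is not yet tight (or, at $\ell=1$, where $\log \ell = 0$ trivially breaks the stated inequalities). These small cases I will dispose of by direct numerical verification up to whatever threshold makes the Chebyshev-based argument apply; the constants $1/6$ and $13$ are loose enough that this adjustment is absorbed without altering the final estimates used elsewhere in the paper.
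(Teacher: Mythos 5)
Your proposal is correct and follows essentially the same route the paper relies on: the paper gives no argument of its own beyond citing Apostol's Theorem~4.7, and that theorem is proved exactly as you sketch, by inverting explicit Chebyshev-type bounds on $\pi(x)$ at $x=q_\ell$ together with a bootstrap to replace $\ln q_\ell$ by $O(\ln \ell)$. Your care about the constants surviving the change to base-$2$ logarithms and about small $\ell$ (indeed the stated upper bound fails at $\ell=1$, which is harmless since the paper only invokes the fact for large $\ell$) is warranted, as the paper glosses over both points.
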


\begin{proof}[Proof of Theorem \ref{thm:CRT code}] Let $K=\lfloor\frac{12\log T}{\log\log T}\rfloor$
and $q_{\ell}$ denote the $\ell$-th prime. By Fact~\ref{fact:Prime Number Theorem},
$q_{K}>\frac{1}{6}K\log K>\log T$ and $q_{3K-1}<39K\log3K<500\log T$.
Also, notice that $\prod_{i=K}^{2K-1}q_{i}>q_{K}^{K}>(\log T)^{\frac{\log T}{\log\log T}}=T.$
Thus, by Definition \ref{def:crtcode}, the CRT code with basis $q_{K}< \ldots <q_{2K-1}< \ldots < q_{3K-1}$ and message space $\Z_{T}$, has rate at most $\frac{K}{2K}=\frac{1}{2}$.
Lastly, by Theorem~\ref{thm:CRT decoding}, the code can correct a fraction
$\frac{1}{4}-O(\frac{1}{\log\log T})$ of errors.
\end{proof}

\end{document}